\newtheorem{theorem}{\textit{Theorem}}
\newtheorem{proposition}{\textit{Proposition}}
\newtheorem{corollary}{\textit{Corollary}}
\newtheorem{definition}{\textit{Definition}}
\newtheorem{remark}{\textit{Remark}}
\def\BibTeX{{\rm B\kern-.05em{\sc i\kern-.025em b}\kern-.08em
 T\kern-.1667em\lower.7ex\hbox{E}\kern-.125emX}}
\begin{document}

\title{Optimizing Age of Information in Wireless Uplink Networks with Partial Observations}
\author{
Jingwei Liu, Rui Zhang, Aoyu Gong, and He Chen
\thanks{The work of H. Chen and J. Liu are supported in part by the Innovation and Technolgy Fund (ITF) under Project ITS/204/20 and the CUHK direct grant for research under Project 4055126. The work of R. Zhang is supported in part by the Research Talent Hub PiH/380/21 under Project ITS/204/20. 
This article was presented in part at GLOBECOM 2020.

J. Liu, R. Zhang and H. Chen are with Department of Information Engineering, The
Chinese University of Hong Kong, Hong Kong SAR, China (e-mail:lj020@ie.cuhk.edu.hk; ruizhang@ie.cuhk.edu.hk; 
he.chen@ie.cuhk.edu.hk).

A. Gong is with the School of Computer and Communication Sciences, \'Ecole Polytechnique F\'ed\'erale de Lausanne, Lausanne 1015, Switzerland (e-mail: aoyu.gong@epfl.ch).
}
}


\maketitle

\begin{abstract}
This paper considers a wireless uplink network consisting of multiple end devices and an access point (AP). Each device monitors a physical process with randomly generated status updates and sends these update packets to the AP in the uplink. The AP aims to schedule the transmissions of these devices to optimize the network-wide information freshness, quantified by the age of information (AoI) metric. Due to the stochastic arrival of the status updates at end devices, the AP only has \textit{partial observations} of system times of the latest status update packets at end devices when making scheduling decisions. Such a decision-making problem can be naturally formulated as a partially observable Markov decision process (POMDP). We reformulate the POMDP into an equivalent belief Markov decision process (belief-MDP), by defining fully observable belief states of the POMDP as the states of the belief-MDP.
The belief-MDP in its original form is difficult to solve as the dimension of its states can go to infinity and its belief space is uncountable. Fortunately, by carefully leveraging the properties of the status update arrival processes (i.e., Bernoulli processes), we manage to simplify the belief-MDP substantially, where every feasible state is characterized by a two-dimensional vector. Based on the simplified belief-MDP, we devise a low-complexity scheduling policy, termed Partially Observing Max-Weight (POMW) policy, for the formulated AoI-oriented scheduling problem.
We derive upper bounds for the time-average AoI performance of the proposed POMW policy. We analyze the performance guarantee for the POMW policy by comparing its performance with a universal lower bound available in the literature. Numerical results validate our analyses and demonstrate that the performance gap between the POMW policy and its fully observable counterpart is proportional to the inverse of the lowest arrival rate of all end devices.
\end{abstract}

\begin{IEEEkeywords}
Age of information, multiuser scheduling, partially observable Markov decision process, and belief Markov decision process.
\end{IEEEkeywords}


\section{Introduction}
\IEEEPARstart{T}{he} rapid development of wireless communication technologies in the past decades has stimulated their ubiquitous applications in time-critical systems, such as vehicular networks and industrial control networks \cite{6917404,7467436,7799033}. 
In these applications,
information (e.g., velocity and position of a vehicle) needs to be delivered to targeted receivers as timely as possible.
The stale information could cause severe consequences, e.g., damages to facilities or even losses of human lives.
Hence, the information timeliness or freshness in these networks is of great importance.
To quantify the information freshness, 
the age of information (AoI) metric has been proposed and extensively investigated in the literature (e.g., see \cite{kosta2017age,8930830,8954939,8406973,9013924,8406966,5984917,6195689,8065840,sun2019age,8006703,8972306,10.1145/3466772.3467040,7415972,7511331,dedhia2020you} and references therein).
More specifically, 
AoI is defined as the time elapsed since the generation of the last successfully received message at destination \cite{kosta2017age}.
Many efforts have been made on tackling transmission scheduling problems to minimize the time-average AoI of various network settings.
Early work focused on the AoI-based transmission scheduling problem in single-user networks, see e.g., \cite{8000687,8712546,8486307,8406846,8845182,8406909}, where the AoI performance of the single user was optimized by determining when to transmit a status update packet.
Recent work has shifted to design the AoI-based scheduling policies for multiuser networks, see e.g., \cite{8514816,8935400,8933047,9348022,8807257,8845083,9376717}.
In these work, the network-wide time-average AoI was optimized by determining how to schedule the transmission sequence of multiple users.

In downlink multiuser networks, an access point (AP) monitors multiple information sources and schedules transmissions of the generated status update packets from itself to the corresponding end devices, respectively.
In this context, the AP can completely know the evolution of AoI when acknowledgements are provided by end devices.
The AoI-based scheduling problems in downlink multiuser networks were thoroughly studied in \cite{9525063,8514816,8933047}.
The authors in \cite{8514816} considered the ``generate-at-will'' model for the generation of status updates.
In this model, the AP generates a status update for an information source whenever the transmission to its targeted end device is scheduled. 
As such,
the AP only needs to consider the instantaneous AoI values of all end devices when making scheduling decisions.
Authors in \cite{8514816} first proved that in symmetric networks, a greedy policy, which schedules the end device with the highest value of instantaneous AoI, is optimal for minimizing the long-term average AoI. For more general networks, three low-complexity scheduling policies were proposed and compared, including a Max-Weight policy derived from the Lyapunov optimization framework \cite{neely2010stochastic}, a randomized policy, and a Whittle’s Index policy.
Ref. \cite{8933047} extended the Max-Weight policy to the downlink networks with the ``stochastic arrival'' model, and an upper bound for the network-wide time-average AoI was derived. 
On the other hand, \cite{8935400} developed a Whittle's Index policy for the same scenario as in \cite{8933047}.
In the ``stochastic arrival'' model, the generation of status update packets for each information source follows a stochastic process. 
In this case,
the system times of update packets at the AP and the instantaneous AoI values of all end devices need to be jointly considered when designing the scheduling policies for the AP.




In uplink multiuser networks, on the other hand, 
each end device monitors the statuses of a separate information source and sends status update packets to a common AP.
The AP aims to maintain a low network-wide AoI performance by carefully scheduling the transmissions of status update packets in the uplink. As the information destination, the AP has a full track of the AoI values of all streams of status updates. 
For the “generate-at-will” model, each node will
generate a new status update packet once granted to transmit.
As such, the system times of status update packets are always equal to 1 and thus are perfectly known to the AP.

In this case, the AoI-oriented scheduling problems in uplink networks are mathematically equivalent to those in downlink networks when the scheduling constraints of the two types of networks are the same.
By contrast, when it comes to the ``stochastic arrival'' model, the scheduling problems in uplink multiuser networks are largely different from those in downlink networks.
This is because in uplink networks, the AP may need to make scheduling decisions under partial observations of the system times of randomly generated status update packets at end device side.
The complete observations of the system times of all status update packets requires end devices to report the arrivals of new status updates to the AP before each scheduling decision-making.
Such a reporting procedure could lead to considerable network overhead, especially when status update packets are short.
Therefore, it is of practical significance to devise scheduling policies for the AP that can be executed without the need of complete knowledge of the system times of status update packets at the end device side. 
In that case, the AP only has an observation of the system time of status update of a certain end device only when the device is scheduled to transmit and the transmission is successful.
To the best knowledge, such an AoI-based scheduling problem for uplink multiuser networks with partial observations has not been thoroughly studied in open literature.
We note that \cite{8935400} developed a Whittle’s Index policy for optimizing AoI in
an uplink multiuser network with the “stochastic arrival” model. However, the system times of status update packets at all nodes are assumed to be fully observed, making the scheduling problem mathematically equivalent to that in \cite{8933047}.

As an attempt to fill the gap, in this paper we aim to optimize the \textit{expected weighted sum AoI} for an uplink multiuser network with stochastic arrivals of status updates at end devices.
The arrivals of status update packets at end devices are assumed to follow independent Bernoulli processes, which is commonly used in the literature (see e.g., \cite{8514816,8933047,8935400,neely2010stochastic}). 
We consider that the end devices will not report the random arrivals of the status updates to the AP for minimizing the network overhead. As such, the designed scheduling policy needs to make decisions with partial observations. The main contributions of this paper are summarized as follows. 
\begin{itemize}
\item We formulate our AoI-oriented scheduling problem as a POMDP problem considering the incomplete knowledge of status update arrivals of end devices at the AP.
The instantaneous system times of status update packets at the end devices and the instantaneous AoI at the AP are jointly defined as the states
of the POMDP.
We reformulate the POMDP to an equivalent belief Markov decision process (belief-MDP), where
the states of the belief--MDP, termed belief states, are defined as the posterior distributions of the states of the POMDP.
We remark that computing the optimal policy for the belief-MDP (or the POMDP) is a PSPACE-complete problem \cite{papadimitriou1987complexity}, which is not practically computable.
Nevertheless, such a belief-MDP reformulation benefits the policy design and the theoretical analysis since the belief states characterize sufficient statistics of the system.
\item 
To solve the formulated belief-MDP, we propose an effective simplification to characterize all feasible infinite-dimensional belief states as two-dimensional vectors.
This is achieved by analyzing how Bernoulli arrival processes of status updates at end devices affect the evolution of the belief states.
By doing so, we reduce the continuous spaces of the belief states to discrete ones. That is, we extract the feasible belief spaces from the corresponding distribution spaces.
The simplification of belief updates in belief-MDP
largely facilitate the design of scheduling policies as well as the theoretical analysis of the scheduling policies' performance.
\item
We devise a low-complexity Partially Observable Max-Weight (POMW) policy, inspired by the Lyapunov optimization framework \cite{neely2010stochastic}. The POMW policy aims to minimize a Lyapunov Drift function, defined as the expectation of the sum of weighted instantaneous AoI, in each time slot under condition of the current belief states. 
Based on the simplified belief-MDP model and a Randomized Scheduling policy proposed in \cite{8933047}, we derive upper bounds for the expected weighted sum AoI performance of the POMW policy. 
Further, we evaluate the performance guarantee for the POMW policy, which is defined as the ratio between the AoI performance of the POMW policy and that of a universal lower bound.
Simulation results validate our theoretical analysis. Simulation results also show that the performance gap between the POMW policy and its fully observable counterpart is inversely proportional to the lowest arrival rate of all end devices.
Moreover, the proposed POMW policy is superior to the baseline policies, which do not use the statistical information of the system times of the status update packets at end devices.
\end{itemize}

We notice a handful of efforts on designing AoI-oriented scheduling policies that also considered networks with partial observations \cite{8712546,9517880,8404794,9590543}.
Leng and Yener investigated the AoI minimization in a time-slotted cognitive radio energy harvesting network \cite{8712546}. 
In \cite{8712546}, a secondary user decides whether to send a status update in each time slot with the partially observable occupation status of the spectrum. In this context, the AoI minimization problem was formulated as a POMDP.
The optimal policy with threshold structure was sought by dynamic programming (DP).
In \cite{9517880}, the authors formulated the AoI optimization problem of a status update system with a partially observable Gilbert–Elliott Channel as a belief-MDP.
The authors developed an efficient structure-aware algorithm that is shown to be near-optimal.
Sert and Elif et al. \cite{8404794} investigated an AoI-based minimization on real-life TCP/IP connections with unknown delay and service time distributions.
They trained a Deep Q-network (DQN) algorithm to perform actions on the network and obtained a near-optimal AoI performance.
Shao and Liew et al. \cite{9590543} focused on the minimum-age scheduling for a time-slotted wireless uplink network, where multiple sensors are used to monitor one common physical process.
The authors formulated a POMDP and analyzed the performance of a greedy policy where an AP schedules the sensor with the minimum system time in each slot.
All of the above work considered the AoI-based scheduling problem with one stream of status update. As such, the developed methods cannot be directly applied to solve our scheduling problem with multiple streams of status updates, where we need to deal with the intricate interactions of the AoI evolutions of multiple end devices.


We remark that part of the results presented in this work has been published in the conference version \cite{9348022}. In \cite{9348022}, we formulated the considered scheduling design problem as a POMDP and solved it by directly applying the classical DP method. A low-complexity myopic policy was also proposed. However, the complication of the problem in its default form stopped us from conducting any theoretical analysis. In the current work, we reformulate the POMDP into a belief-MDP and put forth an effective simplification of the belief-MDP. Such simplification substantially facilitate the design of the POMW policy as well as the theoretical analysis of its performance.


\textbf{\textit{Notations:}} In this paper, $\mathbb{Z}^+$ denotes the set of non-negative integers, $\mathbb{E}[\cdot]$ denotes the operator of expectation, $[\cdot]$ denotes the representation of a vector containing the same type of elements, $\left\langle\cdot\right\rangle$ denotes a tuple containing different types of elements, and $\rVert\cdot\rVert_1$ denotes the $l_1$-norm of a vector.
For two vectors, $\bm{v}=[v_l]_{l=1}^L$ and $\bm{w}=[w_l]_{l=1}^L$, with the same dimension $L$, $\bm{v}\ge\bm{w}$ represents $v_l\ge w_l,\forall l$.

\section{System Model and POMDP Formulation}\label{SectionSys}


\subsection{System Model}

As shown in Fig. \ref{Network}, we consider a multiuser wireless uplink network consisting of one access point (AP) and $N$ status-updating end devices.
Those end devices are also called nodes hereafter, and indexed by $i \in \{ 1,\dots,N \}$. The considered system is time-slotted, and the time slot is indexed by $t \in \mathbb{Z}^+$. We consider a stochastic arrival model for the status update packets at each node. Specifically, the status update arrival at node $i$ in each slot follows an independent and identically distributed (i.i.d.) Bernoulli process\footnotemark{} with an arrival rate $\lambda_i$. 
\footnotetext{The extension to the case with Markovian packet arrival processes will be discussed in Remark \ref{RemarkM}.}
Each node maintains a single buffer to store the latest status update. That is, the current status update in the buffer will be replaced once a new one arrives. Such a single-buffer configuration, equivalent to the last-come-first-served (LCFS) queuing model, has been shown to achieve the best information freshness performance in stochastic arrival models \cite{6195689,8933047}. 
All nodes share a common wireless channel, and their transmissions of the status update packets in the uplink are coordinated by the AP. Specifically, at the beginning of each slot, the AP grants one node to transmit its latest status update packet. 
We denote the scheduling indicator for node $i$ in slot $t$ by $a_{t,i}\in\left\{0,1\right\}$, which is equal to 1 when node $i$ is scheduled to transmit in slot $t$, and $a_{t,i}=0$ otherwise.
Only one node is scheduled to transmit in each slot, thus the transmission collision among nodes is avoided. The transmission of each status update packet takes one time slot. We further assume that the transmission from node $i$ to the AP is error-prone with a time-invariant successful rate $p_i$.

\begin{figure}[t]
	\centering
	\includegraphics[width=0.5\textwidth]{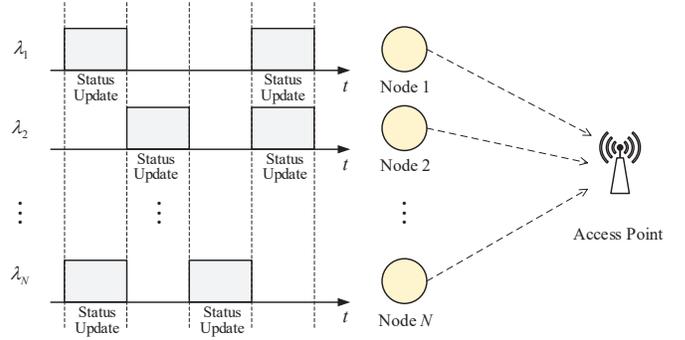}
	\caption{The multiuser uplink system with stochastic arrival of status updates.}
	\label{Network}
\end{figure}

\subsection{Information Freshness Metric}\label{SectionAoI}
We adopt the AoI metric, originally proposed in \cite{5984917}, to quantify the information freshness of all nodes at the AP. 
To characterize the AoI mathematically, we first define the local age $d_{t,i}$, which measures the system time of the last arrived status update packet at node $i$ in slot $t$. If there is no arrival of status update at node $i$ in the current slot, the local age of the $i$-th node will increase by $1$ at the beginning of next slot. Otherwise, the packet stored at the node is replaced by the newly arrived one, and its local age is reset to $1$ at the beginning of next slot. Therefore, the evolution of $d_{t,i}$ is given~by
\begin{equation}
	d_{t+1,i}=\begin{cases}
		1, &\text{if status update arrives at node}\ i\\
		&\text{in slot}\ t,\\
		d_{t,i} +1, &\text{otherwise}.
	\end{cases}
\end{equation}

If node $i$ is scheduled to transmit at the beginning of slot $t$ and its transmission is successful, the local age of node $i$ will be observed by the AP. As such, the destination AoI of node $i$, denoted by $D_{t,i}$, will be set to $d_{t,i}+1$ at the beginning of the next slot.
Otherwise, if node $i$ is not scheduled or the transmission fails, $D_{t,i}$ will increase by $1$ at the beginning of the next slot. Mathematically, the evolution of $D_{t,i}$ is given~by

\begin{equation}\label{eqAoI}
	\begin{split}
	  &D_{t+1,i}\\
	  &=\begin{cases}
		d_{t,i} +1, 
		&\text{if the status update of node $i$ is}\\
		&\text{successfully received by the AP in slot}\ t,\\
		D_{t,i} +1, &\text{otherwise}.
	\end{cases}
	\end{split}
\end{equation}
In this paper, we assume that the local age and the destination AoI of each node are initialized as $1$
, i.e., $d_{0,i}=D_{0,i}=1,\forall i$.

We remark that the local age $d_{t,i}$ and the destination AoI $D_{t,i}$ evolve independently across nodes.
We consider that the AP does not grasp the specific evolutions of the local ages at all nodes and it only has the statistical arrival information (i.e., the values of $\lambda_i$'s). Otherwise, the nodes need to notify each of their status update arrivals to the AP, which will lead to considerable network overhead, especially when the status update packets are relatively short. In this context, the AP only has an observation of the local age of a particular node once the node is scheduled and the transmission succeeds. Nevertheless, the AP can track the destination AoI values of all nodes, no matter whether they are scheduled or not. Overall, the AP has full information of the AoI $D_{t,i}$'s and partial observations of the local age $d_{t,i}$'s when making scheduling decisions.

\subsection{POMDP Formulation}\label{POMDPsec}

In this work, we adopt the long-term expected weighted sum AoI (EWSAoI) as the performance metric, which is mathematically defined as
\begin{equation}
    \lim_{T\to \infty}\frac{1}{NT}\mathbb{E}\left[\sum^T_{t=1}\sum^N_{i=1}\omega_iD_{t,i}\Big|\pi\right],
\end{equation} 
where $\omega_i\in(0,\infty)$ denotes the weight coefficient of node $i$, the expectation is taken over all system dynamics, and $\pi$ denotes a given multiuser scheduling policy.
We aim to devise a scheduling policy $\pi$ for the AP to minimize the long-term EWSAoI while fulfilling the scheduling constraint. 
Mathematically, we have the following optimization problem
\begin{equation}
    \begin{split}
    \min_\pi\quad & \lim_{T\to \infty}\frac{1}{NT}\mathbb{E}\left[\sum^T_{t=1}\sum^N_{i=1}\omega_iD_{t,i}\Big|\pi\right],\\
    \mbox{s.t.,}\quad &
    \sum^N_{i=1}a_{t,i}\le 1,\forall t,
    \end{split}
\end{equation}
where the scheduling constraint is that the AP can schedule at most one node in each slot.
In our design,
the AP makes the scheduling decision
at the beginning of each time slot. 
\textit{The information available at the AP} for decision making includes the values of $\lambda_i$'s, $p_i$'s, $\omega_i$'s, the full observations of the destination AoI $D_{t,i}$'s, and the partial observations of the local age $d_{t,i}$'s.
Such a decision-making problem with partial observations is naturally formulated as a POMDP with the following components:

\begin{itemize}
\item \underline{\textit{States.}} The state of node $i$ in slot $t$ is denoted by $\bm{s}_{t,i}\triangleq \left\langle d_{t,i},D_{t,i}\right\rangle$, where $d_{t,i},D_{t,i} \in \mathbb{Z}^+$. Then, the network-wide state in slot $t$ is denoted by $\bm{s}_t \triangleq \left\langle \bm{d}_t,\bm{D}_t\right\rangle$, where $\bm{d}_t\triangleq \left[ d_{t,1},d_{t,2},\dots,d_{t,N}\right] \in \bm{\mathcal{D}}\triangleq \left( \mathbb{Z}^+\right)^N $ and $\bm{D}_t\triangleq \left[ D_{t,1},D_{t,2},\dots,D_{t,N}\right] \in \bm{\mathcal{D}}$, respectively. In addition, we denote the spaces of $\bm{s}_{t,i}$ and $\bm{s}_t$ by $ \bm{\mathcal{S}}_i\triangleq\left\lbrace \bm{s}_{t,i}|D_{t,i}\geq d_{t,i}\right\rbrace $ and $ \bm{\mathcal{S}}\triangleq\left\lbrace \bm{s}_t|\bm{D}_t\geq\bm{d}_t\right\rbrace $, respectively.
	
\item \underline{\textit{Actions.}} The network-wide action in slot $t$ is denoted by $\bm{a}_t\triangleq \left[ a_{t,1},a_{t,2},\dots,a_{t,N} \right]$.
Recall that AP schedules at most one node in each slot, hence we have $\left|\bm{a}_t\right|\le 1$. Denote by $\bm{\mathcal{A}}$ the space of all actions, we have $\bm{a}_t\in\bm{\mathcal{A}}$.
\item \underline{\textit{Observations.}} 
We denote the network-wide observation of the state of the nodes by $\bm{o}_t\triangleq \left[ \bm{o}_{t,1},\bm{o}_{t,2},\dots,\bm{o}_{t,N} \right] \in \bm{\mathcal{O}}$, where $\bm{\mathcal{O}}$ is the space of all observations. 
Specifically, $\bm{o}_{t,i} \triangleq \left\langle D_{t,i},\hat{d}_{t,i} \right\rangle$ is the observation of node $i$ in slot $t$, consisting of the full-observed destination AoI, $D_{t,i}$, and the partial-observed local age $\hat{d}_{t,i} $. 
We have $\hat{d}_{t,i}\in \mathbb{Z}^+ \bigcup \left\{X\right\}$, where $X$ denotes no observation of the local age of node $i$ when the node is not scheduled or the node is scheduled but the transmission fails. With these new notations, $\bm{o}_t$ can be denoted by $\left\langle \bm{D}_t, \hat{\bm{d}}_t\right\rangle$, where $\hat{\bm{d}}_t=\left[ \hat{d}_{t,1},\dots,\hat{d}_{t,N} \right] $.
	
\item \underline{\textit{Transition Function.}} We define the transition probability of network-wide states as $\Pr\left(\bm{s}_{t+1}|\bm{s}_t,\bm{a}_t\right)$, which denotes
the conditional probability of state $\bm{s}_{t+1}$ given state $\bm{s}_t$ and action $\bm{a}_t$. We note that the transitions of $\bm{D}_t$ and $\bm{d}_t$ are conditionally independent of each other and the transition of the local age $d_t$ is independent of the action $a_t$. We then have
\begin{equation}
	\Pr\left(\bm{s}_{t+1}|\bm{s}_t,\bm{a}_t \right)= \Pr\left( \bm{D}_{t+1}|\bm{s}_t,\bm{a}_t\right)\Pr\left( \bm{d}_{t+1}|\bm{d}_t\right), 
\end{equation}
where
\begin{equation}
	\label{eq_D}
	\Pr\left( \bm{D}_{t+1}|\bm{s}_t,\bm{a}_t \right) =\prod_{i=1}^{N}\Pr\left(D_{t+1,i}|\bm{s}_{t,i},a_{t,i} \right), 
\end{equation}
and
\begin{equation}
	\label{eq_E}
	\Pr\left( \bm{d}_{t+1}|\bm{d}_t \right) =\prod_{i=1}^{N}\Pr\left(d_{t+1,i}|d_{t,i}\right).
\end{equation}
We can further express each term on the right-hand side of \eqref{eq_D} as
\begin{equation}
	\begin{split}
		&\Pr\left( D_{t+1,i}|\bm{s}_{t,i},a_{t,i} \right)\\
			&=
	\begin{cases}
		p_i,& \text{if}\ a_{t,i}=1,\text{and}\ D_{t+1,i}=d_{t,i}+1, \\				
		1-p_i,& \text{if}\ a_{t,i}=1,\text{and}\ D_{t+1,i}=D_{t,i}+1,\\
		1,& \text{if}\ a_{t,i}=0,\text{and}\ D_{t+1,i}=D_{t,i}+1,\\
		0,& \text{otherwise}.
	\end{cases}
	\end{split}
\end{equation}
	
Similarly, for each term on the right-hand side of \eqref{eq_E}, we have
\begin{equation}\label{Trans_d}
	\Pr\left( d_{t+1,i}|d_{t,i}\right)=
	\begin{cases}
		\lambda_i, &\text{if}\ d_{t+1,i}=1,\\
		1-\lambda_i, &\text{if}\ d_{t+1,i}=d_{t,i}+1,\\
		0, & \text{otherwise.}
	\end{cases} 
\end{equation}

	
\item \underline{\textit{Observation Function.}}
Denote by $\Pr\left( \bm{o}_t|\bm{s}_t,\bm{a}_t\right) $ the network-wide observation function, which is defined as the probability of observation $\bm{o}_t$ conditioned on state $\bm{s}_t$ and action $\bm{a}_t$.
Note that $\bm{D}_t$ is fully observable at the AP and the evolution of $\hat{d}_{t,i}$ with different $i$ are independent from each other. We thus have
\begin{equation}
	\begin{split}
	  \Pr\left( \bm{o}_t|\bm{s}_t,\bm{a}_t\right)
	  &= \Pr\left( \hat{\bm{d}}_t|\bm{d}_t,\bm{a}_t\right)\\
	  &=\prod_{i=1}^{N}\Pr\left( \hat{d}_{t,i}|d_{t,i},a_{t,i}\right),
	\end{split}
\end{equation}
where we term
\begin{equation}
  \begin{split}
    &\Pr\left( \hat{d}_{t,i}|d_{t,i},a_{t,i}\right)\\
	  &=
	  \begin{cases}
		  p_i , &\text{if}\ \hat{d}_{t,i}=d_t\ \text{and}\ a_{t,i}=1,\\
		  1-p_i, &\text{if}\ \hat{d}_{t,i}= X \ \text{and}\ a_{t,i}=1,\\	
		  1, &\text{if}\ \hat{d}_{t,i}= X \ \text{and}\ a_{t,i}=0,\\	
		  0, &\text{otherwise.}
	  \end{cases}
  \end{split}
\end{equation}
as the local age observation function of node $i$.

\item \underline{\textit{Immediate Reward.}}
We target to
optimize the long-term EWSAoI. 
Based on that, We define the immediate reward of state $\bm{s}_t$ as $r\left( \bm{s}_t\right)\triangleq \sum^N_{i=1} \omega_i D_{t,i} $.
	
\end{itemize}


We remark that due to the partially observed network-wide state $\bm{s}_t$, the formulated POMDP problem cannot be solved by directly applying the existing AoI-oriented scheduling frameworks designed for the scenarios with full observation of network-wide states (e.g., \cite{8486307,8933047,8514816}). 
To circumvent the problem, we will leverage the sufficient posterior probability distribution of $\bm{s}_t$ with the observation $\bm{o}_t$ at the AP. 
Such probability distributions are also named as the belief states of the POMDP \cite{papadimitriou1987complexity}.
In the following, we will reformulate our POMDP as a belief-MDP, where the belief states of the POMDP are regarded as the states of the belief-MDP.

\section{Belief-MDP Formulation and Simplification}\label{POMDP}
 
In this section, we first reformulate the POMDP introduced in Section \ref{SectionSys} as a belief-MDP and then simplify the belief-MDP to gain more insights.

\subsection{Reformulation of the POMDP}
With reference to \cite{mcallester2013approximate}, a POMDP can be converted to an equivalent belief-MDP based on the belief states of the system.
To that end, we now introduce the definitions of the belief states and other components of the belief-MDP version of our POMDP problem as follows:

\begin{itemize}
  \item \underline{\textit{Belief States.}} The belief state of node $i$ is defined as the current probability distribution over $\bm{\mathcal{S}}_i$ on condition of the history so far.
Mathematically, the belief state of node $i$ in slot $t$ is denoted by 
\begin{equation}
  \bm{B}_{t,i} \triangleq \left[ B_{t,i}(\bm{s}_{t,i})\right]_{\bm{s}_{t,i}\in \bm{\mathcal{S}}_i}
\end{equation} 
with $\rVert \bm{B}_{t,i}\rVert_1=1 $, where $B_{t,i}(\bm{s}_{t,i})\triangleq \Pr\left( \bm{s}_{t,i}| \bm{h}_{t,i} \right)$ denotes the probability
assigned to state $\bm{s}_{t,i}$ with the current history $\bm{h}_{t,i}\triangleq\left\langle \bm{B}_{1,i},a_{1,i},\bm{o}_{1,i},a_{2,i},\dots,a_{t-1,i},\bm{o}_{t-1,i}\right\rangle$ of node $i$. 
As mentioned in Section \ref{SectionAoI}, ${D}_{t,i}$ is deterministic for a given history profile $\bm{h}_{t,i}$ since $\bm{h}_{t,i}$ includes $\bm{o}_{t-1,i}$. Therefore, $\bm{B}_{t,i}$ can also be represented by $\left\langle {D}_{t,i},\bm{b}_{t,i}\right\rangle$, where $\bm{b}_{t,i}\triangleq \left[ b_{t,i}(d_{t,i})\right]_{d_{t,i}\in \mathbb{Z}^+}$
denotes the belief state of the local age of node $i$, and $\rVert \bm{b}_{t,i}\rVert_1=1 $. 
Furthermore, $b_{t,i}\left( {d}_{t,i}\right)\triangleq \Pr\left(d_{t,i}|\bm{h}_{t,i}\right)$ denotes the probability assigned to ${d}_{t,i}$. 
Hence, we have $B_{t,i}\left( \bm{s}_{t,i}\right)= b_{t,i}\left( {d}_{t,i}\right) $ given~${D}_{t,i}$.

The network-wide belief state is defined as the current probability distribution over $\bm{\mathcal{S}}$ on condition of $\bm{h}_t\triangleq\left\langle \bm{B}_1,a_1,\bm{o}_1,\bm{a}_2,\dots,\bm{a}_{t-1},\bm{o}_{t-1}\right\rangle$, and it is also the state of the belief-MDP.
We denote the network-wide belief state in slot $t$ by 
\begin{equation}
  \bm{B}_t\triangleq\left[ B_t(\bm{s}_t)\right]_{\bm{s}_t\in \bm{\mathcal{S}}}=\left\langle \bm{D}_t,\bm{b}_t\right\rangle,
\end{equation}
where $\bm{b}_t\triangleq\left[ b_t(\bm{d}_t)\right]_{\bm{d}_t\in \bm{\mathcal{D}}}$ is the belief state of all local ages in slot $t$ with $b_t\left( \bm{d}_t\right)\triangleq \Pr\left( \bm{d}_t| \bm{h}_t \right) $ denoting the probability\footnotemark{} assigned to $\bm{d}_t$, and with $B_t(\bm{s}_t)\triangleq \Pr\left( \bm{s}_t| \bm{h}_t \right) $ denoting the probability assigned to $\bm{s}_t$.
\footnotetext{We omitted $\bm{h}_{t,i}$ in the definition of the belief state for concise notation.}
Thus, we have $\rVert\bm{B}_t\rVert_1=\rVert\bm{b}_t\rVert_1=1$.
With a given $\bm{D}_t$, the belief state of the local age of each node evolves independently in our POMDP framework, and thus we have ${B}_t\left( \bm{s}_t\right) =b_t\left( \bm{d}_t\right) =\prod_{i=1}^{N}{b}_{t,i}\left( d_{t,i}\right) $. 
Besides, we denote $\bm{\mathcal{B}}$ as the belief space, i.e., the collection of all possible $\bm{B}_t$.
$\bm{\mathcal{B}}$ is also called \textit{belief simplex} \cite{kaelbling1998planning}.

\item \underline{\textit{Belief Update.}} AP can update $\bm{B}_{t+1}$ from $\bm{B}_t$ at the end of slot $t$ after receiving new observations once the last action $\bm{a}_t$ is executed. Recall that $\bm{B}_t=\left\langle \bm{D}_t,\bm{b}_t\right\rangle$, both $\bm{D}_t$ and $\bm{b}_t$ need to be updated. Specifically, the destination AoI of node $i$ , i.e., the $i$-th component of $\bm{D}_t$, can be updated by
	\begin{equation}\label{Du}
		\begin{split}
			D_{t+1,i} =
			\begin{cases}
				D_{t,i}+1, & \text{if}\ \hat{d}_{t,i} = X,\\
				\hat{d}_{t,i}+1, & \text{otherwise}.
			\end{cases}
		\end{split}
	\end{equation}
	The update of $D_{t,i}$ is deterministic and independent from node to node.
	Moreover, $\bm{b}_{t+1}$ can be updated from $\bm{b}_t$ through the Bayes’ theorem as
	\begin{equation}\label{bu1}
		\begin{split}
			&b_{t+1}(\bm{d}_{t+1})\\
			&= \rho \sum_{\bm{d}_{t}\in\bm{\mathcal{D}}} b_{t}\left( \bm{d}_{t}\right) \Pr\left( \bm{d}_{t+1}|\bm{d}_{t} \right)\Pr\left( \hat{\bm{d}}_{t}|\bm{d}_{t},\bm{a}_{t}\right),
		\end{split} 
	\end{equation}
	where
	\begin{equation}\label{bu2}
		\rho=\\
		1/\sum_{\bm{d}_{t+1},\bm{d}_{t}\in\bm{\mathcal{D}}}b_{t}\left( \bm{d}_{t}\right) \Pr\left( \bm{d}_{t+1}|\bm{d}_{t} \right)\Pr\left( \hat{\bm{d}}_{t}|\bm{d}_{t},\bm{a}_{t}\right)
	\end{equation}
	is the Bayes normalizing factor. Considering the independent evolutions of $d_{t,i}$'s across nodes, we can also update $\bm{b}_{t}$ via updating $\bm{b}_{t,i}$ of each node $i$ individually. We omit the update equation of $\bm{b}_{t,i}$ here for brevity.

\item \underline{\textit{Actions.}} The action of the belief-MDP in slot $t$ is denoted by $\bm{a}_t\in \bm{\mathcal{A}}$, which is exactly same as that of the POMDP. 

\item \underline{\textit{Transition Function.}} The transition function of the belief-MDP is given by

\begin{equation}
	\begin{split}
		&\Pr\left(\bm{B}_{t+1}|\bm{B}_t,\bm{a}_t\right)\\
		&=\sum_{\bm{o}_t\in\bm{\mathcal{O}}}\Pr\left(\bm{B}_{t+1}|\bm{B}_t,\bm{a}_t,\bm{o}_t\right)\Pr\left(\bm{o}_t|\bm{B}_t,\bm{a}_t\right),
	\end{split}
\end{equation}
where
\begin{equation}
	\begin{split}
		&\Pr\left(\bm{o}_t|\bm{B}_t,\bm{a}_t\right)\\
		&=\sum_{\bm{s}_{t+1},\bm{s}_{t}\in\bm{\mathcal{S}}}B_{t}\left( \bm{s}_{t}\right) \Pr\left( \bm{s}_{t+1}|\bm{s}_{t},\bm{a}_t \right)\Pr\left( \bm{o}_{t}|\bm{s}_{t},\bm{a}_{t}\right),
	\end{split}
\end{equation}
and
\begin{equation}\label{BMDPTF}
\begin{split}
  & \Pr\left(\bm{B}_{t+1}|\bm{B}_t,\bm{a}_t,\bm{o}_t\right) \\ 
	& = \begin{cases}
		1,& \text{if the belief update with arguments} \\
		& \bm{B}_t,\bm{a}_t,\bm{o}_t \text{ returns}\ \bm{B}_{t+1},\\
		0,& \text{otherwise}.
	\end{cases}
\end{split}
\end{equation}

\item \underline{\textit{Policy.}} We adopt a deterministic stationary scheduling policy $\pi$ for the belief-MDP. The policy maps the belief space $\bm{\mathcal{B}}$ to the action space
in each slot.

\item \underline{\textit{Reward.}} Since the destination AoI is deterministic for the AP, the immediate expected reward on condition of belief state $\bm{B}_t$ is the same as that in the POMDP, i.e., $R\left( \bm{B}_t\right)\triangleq\mathbb{E}\left[r(\bm{s}_t)|\bm{B}_t\right]=\sum^N_{i=1} \omega_i D_{t,i} $. 
On this basis, the objective problem can be rewritten as
\begin{equation}
    \begin{split}
    \min_\pi\quad & \lim_{T\to \infty}\frac{1}{NT}\mathbb{E}\left[\sum^T_{t=1}R(\bm{B}_t)\Big|\bm{B}_1,\pi\right],\\
    \mbox{s.t.,}\quad &
    \rVert\bm{a}_t\rVert_1\le 1,\forall t,
    \end{split}
\end{equation}
where $\bm{B}_1$ is a predefined initial belief state. 
Recall that we assume $d_{0,i}=D_{0,i}=1,\forall i$, before running the network, and thus $\bm{B}_{1,i}=\left\langle 2,\left[ \lambda_i,1-\lambda_i,0,\cdots\right] \right\rangle,\forall i$.
\end{itemize}

We remark that
the belief update is computationally complicated when the dimension of the belief states is high, and is impractical when the dimension goes to infinity.
Moreover, the continuousness of the belief space $\bm{\mathcal{B}}$ leads to a PSPACE hardness of optimizing the EWSAoI of the belief-MDP optimally \cite{papadimitriou1987complexity}.
Thus, it is intractable to optimize the EWSAoI of the network exactly.
As such, we are motivated to further analyze the belief-MDP to find a more feasible solution.

\subsection{Belief-MDP Simplification}

We subsequently show the existence of a simplified representation of the belief-MDP with the given $\bm{B}_1$. 
To start, we have the following definition:

\begin{definition}\label{DefineKM}
	Assume AP schedules node $i$ in slot $t$ with observation $d_{t,i}=k_i$, and then does not receive
	any packet from node $i$ in the following $m_i$ slots. Define the local age belief state of node $i$ in slot $t+m_i$ by $\bm{c}(k_i,m_i)$, namely, the belief of node $i$ with the last observation $k_i$
	followed by $m_i$ elapsed slots.
\end{definition}

For convenience, we ignore index $i$ for nodes and introduce the following proposition.

\begin{proposition}\label{prop_km}
	The distribution vector of the local age belief state $\bm{c}(k,m)$ of node $i$ in slot $t$ can be given by
	\begin{equation}\label{ckm}
		\begin{split}
			\bm{c}(k,m)&=\left[ c_{k,m}\left( d_t\right) \right]_{d_t\in\mathbb{Z}^+}\\
			&=\left[ \lambda,\lambda\gamma,\lambda\gamma^2,
		\cdots,\lambda\gamma^{m-1},0,\cdots,0,\gamma^m,0,\cdots\right],
		\end{split} 
	\end{equation}
	where $k,m \in \mathbb{Z}^+ $, $\gamma=1-\lambda$, and $c_{k,m}\left( {d}_t\right) $ denotes the probability assigned to ${d}_t$. The position of entry $\gamma^m$ is $k+m$, and this denotes that the corresponding destination AoI of entry $\gamma^m$ is $k+m$.
\end{proposition}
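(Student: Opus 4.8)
The plan is to prove \eqref{ckm} by induction on the number of elapsed slots $m$, after first isolating the crucial structural fact that the ``no observation'' event is statistically uninformative about the local age. First I would examine the belief update \eqref{bu1}--\eqref{bu2} under the conditioning stipulated in Definition \ref{DefineKM}: across the $m$ slots following the last successful reception, node $i$ is either unscheduled or scheduled-but-failed, so in every one of those slots $\hat{d}_{t,i}=X$. Inspecting the local age observation function, the likelihood $\Pr(\hat{d}_{t,i}=X\mid d_{t,i},a_{t,i})$ equals $1-p_i$ (when $a_{t,i}=1$) or $1$ (when $a_{t,i}=0$), and in either case it does not depend on the value of $d_{t,i}$. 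Consequently this factor is constant across the summation in \eqref{bu1} and is entirely absorbed into the Bayes normalizer $\rho$ of \eqref{bu2}. The belief update therefore collapses to the pure Markov prediction step $b_{t+1}(d_{t+1})=\sum_{d_t} b_t(d_t)\Pr(d_{t+1}\mid d_t)$ driven solely by the local-age transition \eqref{Trans_d}; in particular, the resulting vector $\bm{c}(k,m)$ is the same whether a given idle slot was an unscheduled slot or a failed transmission.

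With this reduction in hand, the induction is routine. For the base case $m=0$, the fresh observation $d_t=k$ makes the belief a point mass at $k$, i.e. $\bm{c}(k,0)=[0,\dots,0,1,0,\dots]$ with the single $1$ at position $k$, which matches \eqref{ckm} since $\gamma^0=1$ and the prefix $\lambda\gamma^{j-1}$ is empty. For the inductive step I would apply \eqref{Trans_d} to the claimed vector for $m$: the single ``tail'' mass $\gamma^m$ sitting at position $k+m$ splits into $\lambda\gamma^m$ routed to position $1$ (a new arrival) and $\gamma^{m+1}$ advanced to position $k+m+1$, while each ``head'' mass $\lambda\gamma^{j-1}$ at position $j\le m$ advances to position $j+1$ with weight $\gamma$ (yielding $\lambda\gamma^{j}$) and contributes its $\lambda$-share to position $1$. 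Summing all $\lambda$-shares at position $1$ gives $\lambda$ times the total mass, which is $\lambda$ because the vector is a probability distribution; the remaining entries line up exactly as $[\lambda,\lambda\gamma,\dots,\lambda\gamma^{m},0,\dots,0,\gamma^{m+1},0,\dots]$ with $\gamma^{m+1}$ at position $k+m+1$, completing the step. Here one uses $k\ge 1$ to guarantee $k+m+1>m+1$, so the tail never collides with the head block.

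The crux—and the only step requiring genuine care—is the first one: verifying that $\Pr(\hat{d}_{t,i}=X\mid d_{t,i},a_{t,i})$ is constant in $d_{t,i}$, so that an absent observation contributes nothing beyond renormalization. Once that is secured, the geometric shape of $\bm{c}(k,m)$ is forced by the two-way branching in \eqref{Trans_d} (reset to $1$ with probability $\lambda$, increment with probability $\gamma$), and the induction carries through mechanically. As a final sanity check I would confirm normalization via $\sum_{j=1}^{m}\lambda\gamma^{j-1}+\gamma^{m}=\lambda\frac{1-\gamma^{m}}{1-\gamma}+\gamma^{m}=1$, consistent with $\lVert\bm{b}_{t,i}\rVert_1=1$.
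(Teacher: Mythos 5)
Your proof is correct, and its core is the same induction on $m$ that the paper uses: the two-way mass splitting dictated by \eqref{Trans_d} (reset to position $1$ with probability $\lambda$, advance with probability $\gamma$) reproduces the head block $\lambda,\lambda\gamma,\dots,\lambda\gamma^{m}$ and pushes the tail mass $\gamma^{m+1}$ to position $k+m+1$. The differences are worth noting, and they favor your write-up on one point. The paper's proof starts from $\bm{c}(k,1)$ and applies \eqref{Trans_d} directly, tacitly assuming that slots with $\hat{d}_{t,i}=X$ reduce the Bayes update \eqref{bu1}--\eqref{bu2} to the pure prediction step; your preliminary step---observing that $\Pr(\hat{d}_{t,i}=X\mid d_{t,i},a_{t,i})$ equals $1-p_i$ or $1$ and in either case is constant in $d_{t,i}$, hence is entirely absorbed by the normalizer $\rho$---supplies exactly this missing justification, and correctly identifies it as the only step needing genuine care. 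The remaining differences are cosmetic: you anchor the induction at $m=0$ (point mass at $k$) where the paper starts at $m=1$; you obtain the first entry as $\lambda$ times the total unit mass, whereas the paper obtains it by normalization via $1-\lambda\sum_{l=1}^{m}\gamma^{l}-\gamma^{m+1}=\lambda$ (equivalent computations); and your explicit remark that $k\ge 1$ keeps the tail from colliding with the head block is left implicit in the paper. The one small thing you omit is the proposition's side claim about the destination AoI: the paper checks via \eqref{eqAoI} that $D_{t+m}=k+m$, since $D_{t+1}=d_t+1=k+1$ after the successful reception and $D$ then increments deterministically in each slot without a reception; this is immediate, but you should state it to cover the full statement.
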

\begin{proof}
	See Appendix \ref{appA}.
\end{proof}

Notice that \textbf{Proposition \ref{prop_km}} follows the evolution branch of local age belief state in \cite[\textbf{Proposition} 4]{9590543}. However, the belief state in \cite{9590543} is the distribution of the local age only, while that in this paper also involves the destination AoI. Moreover, a truncation was given to the local age in \cite{9590543} but not used in this paper.


Define a group of belief states that have the AoI equal to $k+m$ together with $\bm{c}(k,m)$ defined in \textbf{Proposition \ref{prop_km}} as $\bm{C}(k,m)\triangleq \left\langle k+m,\bm{c}(k,m)\right\rangle $ for
$m, k \in \mathbb{Z}^+$.
Denote by $\bm{\mathcal{C}}$ the collection of all possible $\bm{C}(k,m)$. Then, we have the following corollary.

\begin{corollary}\label{corokm}
	Suppose the network has a certain belief state, i.e., $\bm{b}_{0,i}=\bm{e}_1,D_{0,i}=1,\forall i$ before running, then $\bm{B}_{t,i}\in \bm{\mathcal{C}}$ for $t=1,2,\cdots,T,\forall i$. 
\end{corollary}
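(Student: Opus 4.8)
The plan is to prove the claim by induction on the slot index $t$, establishing that a single belief-update step carries any state of the form $\bm{C}(k,m)$ back into $\bm{\mathcal{C}}$. Because the local-age beliefs evolve independently across nodes, I would fix an arbitrary node and suppress the index $i$, exactly as in Proposition \ref{prop_km}. For the base case I would observe that the prescribed initialization $\bm{b}_{0,i}=\bm{e}_1$, $D_{0,i}=1$ is precisely $\bm{C}(1,0)$, and that propagating it through one transition \eqref{Trans_d} (no reception takes place before the network starts) yields $\bm{B}_{1,i}=\langle 2,[\lambda,1-\lambda,0,\cdots]\rangle=\bm{C}(1,1)\in\bm{\mathcal{C}}$, consistent with the $\bm{B}_{1,i}$ already stated in the text.

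For the inductive step I would assume $\bm{B}_{t,i}=\bm{C}(k,m)=\langle k+m,\bm{c}(k,m)\rangle$ and evaluate the update \eqref{bu1}, \eqref{Du} in the two possible observation regimes. When no local age is observed ($\hat d_{t,i}=X$, i.e., the node is either unscheduled or scheduled with a failed transmission), the decisive point is that the likelihood $\Pr(\hat d_{t,i}=X\,|\,d_{t,i},a_{t,i})$ equals $1$ or $1-p_i$ \emph{independently} of $d_{t,i}$; it therefore factors out of the sum in \eqref{bu1} and is absorbed by $\rho$, so the Bayes update degenerates into the pure prediction step $b_{t+1}(d_{t+1})=\sum_{d_t}b_t(d_t)\Pr(d_{t+1}|d_t)$ driven only by \eqref{Trans_d}. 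Feeding $\bm{c}(k,m)$ through this step reproduces the evolution branch of Proposition \ref{prop_km} and returns $\bm{c}(k,m+1)$, while \eqref{Du} gives $D_{t+1,i}=D_{t,i}+1=(k+m)+1$; hence $\bm{B}_{t+1,i}=\bm{C}(k,m+1)\in\bm{\mathcal{C}}$. When instead a successful reception reveals $\hat d_{t,i}=k'$, the likelihood is nonzero only at $d_{t,i}=k'$, the sum in \eqref{bu1} collapses to the point mass $\bm{e}_{k'}$, and one further transition \eqref{Trans_d} gives $\bm{c}(k',1)$ while \eqref{Du} sets $D_{t+1,i}=k'+1$; hence $\bm{B}_{t+1,i}=\bm{C}(k',1)\in\bm{\mathcal{C}}$ for every admissible observed $k'$.

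Since both regimes land in $\bm{\mathcal{C}}$, the induction closes and $\bm{B}_{t,i}\in\bm{\mathcal{C}}$ for all $t=1,\dots,T$ and all $i$. I expect the genuinely substantive step --- and the one I would flag as the crux --- to be the uninformativeness of the ``no-observation'' event in the first regime: recognizing that the likelihood of $X$ does not depend on the true local age is what collapses the full Bayesian correction into a prediction-only update and thereby confines every reachable belief to the two-parameter family $\{\bm{C}(k,m)\}$. Everything else reduces to the two closure relations $\bm{c}(k,m)\mapsto\bm{c}(k,m+1)$ and $\bm{c}(k,m)\mapsto\bm{c}(k',1)$, which follow directly from Proposition \ref{prop_km}, the Bernoulli transition \eqref{Trans_d}, and the deterministic AoI update \eqref{Du}.
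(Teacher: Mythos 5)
Your proposal is correct and follows essentially the same route as the paper: induction on $t$, with the inductive step split into the two regimes (successful reception resetting the belief to $\bm{c}(\hat{k},1)$ with $D_{t+1,i}=\hat{k}+1$, versus no observation yielding $\bm{c}(k,m+1)$ with $D_{t+1,i}=k+m+1$). Your argument is in fact slightly more explicit than the paper's, since you spell out why the ``no-observation'' likelihood factors out of the Bayes update in \eqref{bu1}, a step the paper leaves implicit by appealing to Proposition~\ref{prop_km}.
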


\begin{proof}
	We use induction to prove it. First, it is clear that $\bm{b}_{1,i}=\bm{c}(1,1),D_{1,i}=2$, and hence $\bm{B}_{1,i}\in \bm{\mathcal{C}}$. Suppose $\bm{B}_{t,i}=\left\langle k_{t,i}+m_{t,i},\bm{c}(k_{t,i},m_{t,i}) \right\rangle \in\bm{\mathcal{C}},\forall i$, where $k_{t,i}$ and $m_{t,i}$ denote the last observation of local age and the number of slots elapsed since the last observation of node $i$ in slot $t$, respectively. Then, if node $i$ is scheduled and the status update is successfully received by the AP, we have $\bm{b}_{t+1,i}=\bm{c}(\hat{k}_{t,i},1)$ and $D_{t+1,i}=\hat{k}_{t,i}+1$, where $\hat{k}_{t,i}$ is the local age observation of node $i$ in slot $t$ and $\hat{k}_{t,i}\in\left\lbrace 1,2,\cdots,m_{t,i}\right\rbrace \cup \left\lbrace k_{t,i}+m_{t,i} \right\rbrace$. This means that $\bm{B}_{t+1,i}\in\bm{\mathcal{C}},\forall i$. If node $i$ is not scheduled or the transmission fails, $\bm{B}_{t+1,i}=\left\langle k_{t,i}+m_{t,i}+1,\bm{c}(k_{t,i},m_{t,i}+1) \right\rangle \in \bm{\mathcal{C}},\forall i$. This completes the proof.
	
\end{proof}

Based on \textbf{\textit{Corollary} \ref{corokm}}, each infinite dimensional belief state $\bm{B}_{t,i}\in\bm{\mathcal{C}}$ can be sufficiently represented by two positive integers $k_{t,i}$ and $m_{t,i}$, with $D_{t,i}=k_{t,i}+m_{t,i}$ and $\bm{b}_{t,i}=\bm{c}(k_{t,i},m_{t,i})$. Hence, the belief-MDP framework in \textbf{Section \ref{POMDP}} can be characterized in a much simpler form.
We name this simplified representation of belief MDP as Last-Observation-Characterized (LOC) belief-MDP.
The actions of the LOC belief-MDP are the same as that of the original belief-MDP. The other components of the LOC belief-MDP are presented as follows.

\begin{itemize}
	\item \underline{\textit{States.}} The state of node $i$ in slot $t$ is denoted by $\bm{z}_{t,i}\triangleq \left[ k_{t,i},m_{t,i}\right] $, where $k_{t,i},m_{t,i} \in \mathbb{Z}^+$ are defined in \textbf{\textit{Corollary} \ref{corokm}}. Then, the network-wide state in slot $t$ is denoted by $\bm{z}_t \triangleq \left[ \bm{k}_t,\bm{m}_t\right] $, where $\bm{k}_t\triangleq \left[ k_{t,1},k_{t,2},\dots,k_{t,N}\right] \in \bm{\mathcal{D}}$ and $\bm{m}_t\triangleq \left[ m_{t,1},m_{t,2},\dots,m_{t,N}\right] \in \bm{\mathcal{D}}$. Define $\bm{\mathcal{Z}}\triangleq\bm{\mathcal{D}}\times\bm{\mathcal{D}}$ as the space set of $\bm{z}_t$. $\bm{\mathcal{Z}}$ also corresponds to the feasible part of the belief space $\bm{\mathcal{B}}$ for belief states with the initialization in \textbf{\textit{Corollary} \ref{corokm}}.
	
	\item \underline{\textit{Transition Function.}} We define the transition function of the LOC belief-MDP as $\Pr\left(\bm{z}_{t+1}|\bm{z}_t,\bm{a}_t\right)$, which is given~by
	\begin{equation}
		\label{eq_Dkm}
		\Pr\left( \bm{z}_{t+1}|\bm{z}_t,\bm{a}_t \right) =\prod_{i=1}^{N}\Pr\left(\bm{z}_{t+1,i}|\bm{z}_{t,i},a_{t,i} \right), 
	\end{equation}
	where
	\begin{equation}\label{kmTrans}
		\begin{split}
			&\Pr\left( \bm{z}_{t+1,i}|\bm{z}_{t,i},a_{t,i}=1 \right)\\
		&=
		\begin{cases}
			p_i\lambda_i(1-\lambda_i)^{d-1},& \text{if}\ k_{t+1,i}=d \text{ and}\\
			& m_{t+1,i}=1, \\
			p_i(1-\lambda_i)^{m_{t,i}},& \text{if}\ k_{t+1,i}=k_{t,i}+m_{t,i}\\
			& \text{and } m_{t+1,i}=1,\\
			1-p_i,& \text{if}\ k_{t+1,i}=k_{t,i} \text{ and}\\
			& m_{t+1,i}=m_{t,i}+1,\\
			0,& \text{otherwise},
		\end{cases}
		\end{split}
	\end{equation}
	with $d\in\left\lbrace
	1,2,\cdots,m_{t,i}\right\rbrace$.
	Furthermore, $\Pr\left( \bm{z}_{t+1,i}|\bm{z}_{t,i},a_{t,i}=0 \right)=1$ if $k_{t+1,i}=k_{t,i}$ and $m_{t+1,i}=m_{t,i}+1$.

	\item \underline{\textit{Reward.}} The expected immediate reward given a state $\bm{z}_t$ is rewritten as $R\left( \bm{z}_t\right)\triangleq \sum^N_{i=1} \omega_i (k_{t,i}+m_{t,i}) $.
	
	\item \underline{\textit{Policy.}} The policy for the LOC belief-MDP framework is the same as that in \textbf{Section \ref{POMDP}} with a different domain $\bm{\mathcal{Z}}$. It can be equivalently denoted by $\pi:\bm{\mathcal{Z}}\longmapsto \bm{\mathcal{A}}$.
	
\end{itemize}


In typical work on solving a belief-MDP, one need to use the \textit{Backup operation} \cite{kaelbling1998planning,zhang2001speeding} to repeatedly find more feasible belief states and update the feasible belief space horizon by horizon.
It is computationally complicated, and unlikely to reach most of feasible belief states in the belief simplex.
However, with the above simplification, we reduce the space of belief states sharply from the continuous space $\bm{\mathcal{B}}$ to a discrete space $\bm{\mathcal{Z}}$.
That enables us to directly obtain the full feasible space of the belief states without using the inefficient \textit{Backup operation}.
Furthermore, the completed transition probabilities of belief states can be obtained by \eqref{eq_Dkm}.
\begin{figure}[t]
	\centering
	\includegraphics[width=0.52\textwidth]{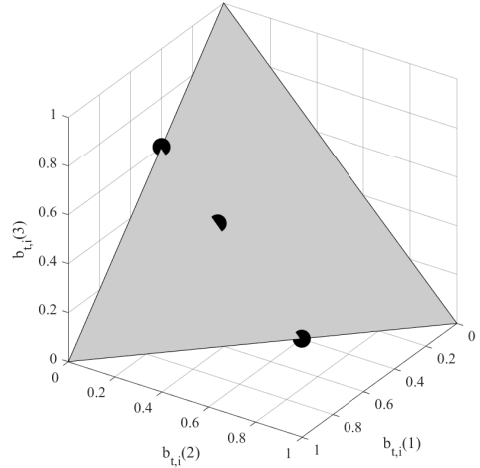}
	\caption{The sub-region of $\bm{\mathcal{B}}$ and its reduced feasible space, i.e., points $[0.4,0.6,0,\cdots]$,$[0.4,0,0.6,0,\cdots]$, and $[0.4,0.24,0.36,0,\cdots]$ in the three-dimensional space with $N=1$ and $\lambda=0.4$.}
	\label{Simplex}
\end{figure}
Fig.\ref{Simplex} illustrates one example of the space reduction, where we have one node with its status update arrival rate $\lambda=0.4$. The gray triangle plane is the sub-region of $\bm{\mathcal{B}}$ in the three-dimensional space, on which each point is associated with a possible local age belief state. After the simplification, the sub-region of the belief space $\bm{\mathcal{B}}$ can be reduced to three feasible belief states, i.e., the three points plotted on the sub-region.

\begin{remark}\label{RemarkM}
We can extend the above LOC belief-MDP simplification process to the scenario with Markovian arrival processes.
Specifically, the belief states of a node can still be characterized by two-dimensional vectors.
More details can be found in Appendix D.
\end{remark}

\section{POMW Policy}


Based on the LOC belief-MDP, we can use the conventional DP approach to solve the AoI scheduling problem.
However, the LOC belief-MDP is formulated for a multiuser model, thus the DP would suffer from the curse of the dimensionality as the number of end devices increases.
To circumvent such a problem, we propose a low-complexity policy for the EWSAoI optimization in the considered network with partial observations, named POMW policy.

We remark that a downlink network with the same status update traffic model as ours was investigated in \cite{8933047}.
Different from our network, the local age of the status update packets are fully observable at the AP due to the downlink setting.
The authors devised an Age-based Max-Weight policy by leveraging the Lyapunov Optimization \cite{neely2010stochastic}.
This policy minimizes a defined Lyapunov drift on condition of the fully observable local age and destination AoI in each slot.
Hereafter, we call it Fully Observable Max-Weight (FOMW) policy.
Moreover, for brevity, we use \textit{``FON''} to represent the network with the fully observable states in \cite{8933047} and \textit{``PON''} to represent our considered network with partial observations in the rest of this paper.

Inspired by \cite{8933047}, we apply the Lyapunov Optimization to devise a low-complexity policy, i.e., the POMW policy, which extends the FOMW policy developed in \cite{8933047}. 
To that end, we will define a Lyapunov Function based on the EWSAoI of the network. The POMW policy attempts to minimize the expected drift of the Lyapunov Function under condition of the current belief state and destination AoI in each slot $t$. Therefore, the EWSAoI of the network can be optimized with lower computational complexity.


We define the linear Lyapunov Function as
\begin{equation}
\label{before_LD}
  L(t)=\frac{1}{N}\sum^N_{i=1}\beta_i D_{t,i},
\end{equation}
where $\beta_i>0$ is an hyper-parameter that can be used to tune the POMW policy to different network configurations. 
The Lyapunov Drift is defined as
\begin{equation}\label{LD}
  \Delta\left[\bm{B}_t\right]=\mathbb{E}\left[L(t+1)-L(t)|\bm{B}_t\right].
\end{equation}
The Lyapunov Drift $\Delta\left[\bm{B}_t\right]$ refers to the expected increase of the Lyapunov Function $L(t)$ in one slot. Hence, by minimizing the drift in \eqref{LD}, the POMW policy equivalently reduces $L(t)$. Consequently, the EWSAoI of the network is kept low.

To develop the POMW policy for the Lyapunov Drift minimization,
we analyze the expression for the drift in \eqref{LD}. 
Recall the definition of $\bm{B}_t$, we realize that the value of $L(t)$ is fixed with a given $\bm{B}_t$. Thus minimizing the Lyapunov Drift in \eqref{LD} is equivalent to minimizing $\mathbb{E}[ L(t + 1)|\bm{B}_t]$.
Recall the evolution of $D_{t,i}$ given in \eqref{eqAoI}, and we have
\begin{equation}\label{EWSAoINS}
	\begin{split}
	  & \mathbb{E}[ L(t + 1)|\bm{B}_t] \\
		= & \sum_{i=1}^{N} \dfrac{1}{N} \mathbb{E}\left[ \beta_i D_{t+1,i}\Bigg|\bm{B}_t\right]\\
		= & \sum_{i=1}^{N} \dfrac{\beta_i}{N}\left[p_i a_{t,i} \sum_{d\in\mathbb{Z}^+}b_{t,i}(d)(d+1)+(1-p_i a_{t,i})(D_{t,i}+1) \right]\\
		= & \dfrac{1}{N} \left[ - \sum_{i=1}^{N} a_{t,i}\beta_i p_i G_{t,i} +\sum_{i=1}^{N}\beta_i(D_{t,i}+1) \right],
	\end{split} 
\end{equation}
where
\begin{equation}\label{G}
  G_{t,i} : = D_{t,i}-\sum_{d\in\mathbb{Z}^+}b_{t,i}(d)d.
\end{equation}

Eq.  \eqref{EWSAoINS} leads to following proposition:

\begin{proposition}\label{ProPOMW}
To minimize the Lyapunov Drift in slot $t$, the POMW policy should schedule node $i$ with the maximal $\beta_i p_i G_{t,i}$.
\end{proposition}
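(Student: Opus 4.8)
The plan is to start directly from the closed-form expression for the conditional Lyapunov drift already derived in \eqref{EWSAoINS}, so that no further expectation needs to be computed. First I would note, as stated just before \eqref{EWSAoINS}, that $L(t)$ is fixed once $\bm{B}_t$ is given, so minimizing the drift $\Delta[\bm{B}_t]$ in \eqref{LD} is equivalent to minimizing $\mathbb{E}[L(t+1)|\bm{B}_t]$. Inspecting the last line of \eqref{EWSAoINS}, the term $\frac{1}{N}\sum_{i=1}^{N}\beta_i(D_{t,i}+1)$ does not involve the action $\bm{a}_t$ and is therefore a constant with respect to the scheduling decision; only the term $-\frac{1}{N}\sum_{i=1}^{N} a_{t,i}\beta_i p_i G_{t,i}$ depends on $\bm{a}_t$. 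Hence minimizing $\mathbb{E}[L(t+1)|\bm{B}_t]$ reduces to maximizing the linear objective $\sum_{i=1}^{N} a_{t,i}\beta_i p_i G_{t,i}$ over the feasible action set.

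The key intermediate step is to establish the sign of the coefficients $\beta_i p_i G_{t,i}$. I would argue that $G_{t,i}\ge 0$ for every node $i$: by the definition of the per-node state space $\bm{\mathcal{S}}_i=\{\bm{s}_{t,i}\,|\,D_{t,i}\ge d_{t,i}\}$, the belief vector $\bm{b}_{t,i}$ is supported only on local-age values $d$ satisfying $d\le D_{t,i}$, so using $\rVert\bm{b}_{t,i}\rVert_1=1$ we get $\sum_{d\in\mathbb{Z}^+} b_{t,i}(d)\,d \le D_{t,i}\sum_{d\in\mathbb{Z}^+} b_{t,i}(d)=D_{t,i}$, which by \eqref{G} gives $G_{t,i}\ge 0$. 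Since $\beta_i>0$ and $p_i\ge 0$ by assumption, each coefficient $\beta_i p_i G_{t,i}$ is non-negative.

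With all coefficients non-negative, I would finish by the observation that maximizing $\sum_{i=1}^{N} a_{t,i}\beta_i p_i G_{t,i}$ subject to the scheduling constraint $\rVert\bm{a}_t\rVert_1\le 1$ is a trivial combinatorial problem: at most one indicator $a_{t,i}$ can be set to $1$, and because the objective is a sum of non-negative weighted indicators, its maximum is attained by activating the single largest coefficient, i.e.\ by scheduling $\arg\max_i \beta_i p_i G_{t,i}$. This is exactly the claim of the proposition. I expect the only non-routine point to be the non-negativity of $G_{t,i}$, which is what guarantees that scheduling (rather than idling) the max-weight node is genuinely optimal and that the greedy choice of the largest weight solves the constrained linear program; the remaining steps are elementary.
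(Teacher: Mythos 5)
Your proposal is correct, and it is essentially the argument the paper has in mind: the paper omits the proof as ``straightforward,'' relying exactly on the observation that in \eqref{EWSAoINS} only the term $-\frac{1}{N}\sum_{i=1}^{N} a_{t,i}\beta_i p_i G_{t,i}$ depends on the action, so minimizing the drift reduces to picking the single node with the largest weight $\beta_i p_i G_{t,i}$. Your additional step establishing $G_{t,i}\ge 0$ from the state-space constraint $D_{t,i}\ge d_{t,i}$ is a worthwhile piece of rigor the paper skips, since under the constraint $\sum_{i=1}^{N} a_{t,i}\le 1$ idling is feasible, and non-negativity of the coefficients is precisely what rules out idling being strictly better than scheduling the max-weight node.
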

The proof of \textbf{Proposition \ref{ProPOMW}} is straightforward and hence is omitted. Before the POMW policy making the scheduling decision, the belief probabilities $b_{t,i}(d)$ need to be updated based on the observations of the previous slot.

\begin{remark}
Note that when local age is fully observed, we have
\begin{equation}
    \sum_{d\in\mathbb{Z}^+}b_{t,i}(d)d= d_{t,i},\forall t,i.
\end{equation}
In this case, the POMW policy will schedule node $i$ with the maximal $\beta_ip_i\left(D_{t,i}-d_{t,i}\right)$
in each slot, which exactly coincides with the criterion of the FOMW policy presented in \cite{8933047}.
This observation indicates that the POMW policy is a generalization of the FOMW policy.
\end{remark}

However, it is hard to implement this online policy on the fly due to the high computational complexity. 
In each slot, the POMW policy selects an action $\bm{a}_t$ by minimizing \eqref{EWSAoINS}. 
This step requires $O\left(N|\bm{\mathcal{A}}| |\mathbb{Z}^+|\right)$ operations. Subsequently, the policy updates the local age belief states for the next slot by the Bayes' theorem. 
Such an update step requires $O\left(N|\mathbb{Z}^+|^2\right)$ operations.
Those two steps are computationally intractable since $\mathbb{Z}^+$ is an infinite set\footnotemark{}.
Thus, the straightforward application of the FOMW policy to our problem could be impractical. 
\footnotetext{One can truncate the maximum value of AoI to make the computation feasible. However, a sufficiently large cap of the AoI should be applied to ensure the accuracy of the truncation, which still leads to unacceptably high computational complexity.}

Thanks to the LOC belief-MDP framework proposed in \textbf{Proposition \ref{prop_km}}, we are able to simplify the expression of $G_{t,i}$ from complex expectation calculation to a closed-form expression of only three parameters. More specifically, it can be expressed as
\begin{equation}
	\begin{split}
		G_{t,i} & = k_{t,i}+m_{t,i}-\bm{c}(k_{t,i},m_{t,i})\bm{n}\\
		&=m_{t,i}+\left[ 1-\left( 1-\lambda_i\right)^{m_{t,i}} \right]\left( k_{t,i}-\frac{1}{\lambda_i}\right),
	\end{split}
\end{equation}
where $\bm{n}=\left[ 1,2,3,\cdots\right]^T $. 
By now, we can formally describe the POMW policy in \textbf{Algorithm \ref{AlgMW}}. The POMW policy can minimize the Lyapunov Drift with low computational complexity, and consequently optimize the EWSAoI of the network.
\begin{algorithm}
	\SetAlgoLined
	
	Initialization: $t=1,m_{t,i}=1,k_{t,i}=1,\forall i$\footnotemark{}\;
	\While{each new slot $t$}{
		\For{each node $i$}{
			$G_{t,i}=m_{t,i}+\left[ 1-\left( 1-\lambda_i\right)^{m_{t,i}} \right]\left( k_{t,i}-\frac{1}{\lambda_i}\right)$\;
		}
		Schedule node $j$ in the current slot, where\\
		$j=\arg \max\limits_{i} \beta_ip_iG_{t,i}$\;
		Obtain the local age observation $\hat{d}_{t,j}$ of node $j$\;
		\uIf{$\hat{d}_{t,j}=X$}{
			$m_{t+1,j}=m_{t,j}+1$\;
		}
		\ElseIf{$\hat{d}_{t,j}=d\in\mathbb{Z}^+$}{
			$m_{t+1,j}=1,k_{t+1,j}=d$\;
		}
		$m_{t+1,i}=m_{t,i}+1,\forall i\ne j$\;
		
		$t=t+1$;
	}
	\caption{POMW Policy}\label{AlgMW}
\end{algorithm}
\footnotetext{To ease understanding and simplify expressions, we set such an initialization. Without loss of generality, we can also select any $\bm{b}_1\in\bm{\mathcal{B}}$ for the initialization. In that case, when $m_{t,i}$ and $k_{t,i}$ do not exist for some $i,t$, we can update belief states by \eqref{bu1} and calculate $G_{t,i}$ by \eqref{G}.}
Note that in \textbf{Algorithm \ref{AlgMW}}, the updates of $k_{t,i}$ and $m_{t,i}$ are based on the transition function of the LOC belief-MDP given in \eqref{kmTrans}.

Thanks to the proposed simplification, the complexity of the step to select an action is reduced from 
$O\left(N|\bm{\mathcal{A}}| |\mathbb{Z}^+|\right)$ to $O\left(N |\bm{\mathcal{A}}|\right)$. The complexity of updating states is reduced from $O\left(N|\mathbb{Z}^+|^2\right)$ to $O\left(2N\right)$. Moreover, we do not need to set a truncation on destination AoI or local age when implementing \textbf{Algorithm \ref{AlgMW}}.



\section{Performance Analyses}


In this section, we first introduce a low-complexity policy named Randomized Scheduling (RS) policy and analyze its EWSAoI performance. Based on its performance, we derive the upper bounds for the EWSAoI performance of the POMW policy. We also analyze the performance guarantee of the POMW policy by comparing its EWSAoI performance with a universal lower bound in the literature.

\subsection{RS Policy}

In \cite{8933047}, an RS policy was proposed to optimize the AoI of the network. In the RS policy, node $i$ is scheduled
with probability $\mu_i\in\left( 0,1\right] $ in each slot. The scheduling probabilities are time-invariant and satisfy $\sum_{i=1}^{N}\mu_i\le1 $.
Notice that the actions of the RS policy is independent of the network states, thus this policy can also be adopted in the considered PON.
With reference to the proof of \cite[Prop. 4]{8933047}, we give the EWSAoI performance of the RS policy, denoted by $R^{RS}$, in \textbf{\textit{Proposition} \ref{prop1}}. 

\begin{proposition}\label{prop1}
	The EWSAoI of the network under the RS policy with scheduling probabilities $\left\lbrace \mu_i\right \rbrace^N_{i=1}$~is
	\begin{equation}\label{prop1Eq}
		R^{RS}=\frac{1}{N}\sum_{i=1}^{N}\omega_i\left(\dfrac{1}{\lambda_i}+\dfrac{1}{p_i\mu_i} \right). 
	\end{equation}
\end{proposition}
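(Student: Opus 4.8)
The plan is to exploit the node-wise independence and compute each node's long-run average destination AoI separately. Since the evolutions of $(d_{t,i},D_{t,i})$ are independent across nodes and the RS actions are state-independent (node $i$ is scheduled with a fixed probability $\mu_i$ irrespective of the belief), the EWSAoI decomposes by linearity of expectation as $R^{RS}=\frac{1}{N}\sum_{i=1}^N\omega_i\bar D_i$, where $\bar D_i\triangleq\lim_{T\to\infty}\frac{1}{T}\sum_{t=1}^T\mathbb{E}[D_{t,i}]$. Hence it suffices to show $\bar D_i=\frac{1}{\lambda_i}+\frac{1}{p_i\mu_i}$ for each $i$ and then form the weighted sum.

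First I would identify the renewal structure of the successful receptions. Under RS, node $i$ is scheduled with probability $\mu_i$ and, when scheduled, its transmission succeeds with probability $p_i$; since both events are i.i.d.\ across slots and independent of the arrival process, a successful reception occurs in each slot independently with probability $q_i\triangleq p_i\mu_i$. Consequently the reception epochs form a renewal process whose inter-reception time $I$ is geometric, $\Pr(I=n)=(1-q_i)^{n-1}q_i$ for $n\ge1$. Recalling \eqref{eqAoI}, at a reception in slot $T$ the AP sets $D_{T+1,i}=d_{T,i}+1$, and between receptions $D_{t,i}$ grows by one per slot; thus over the cycle $t=T+1,\dots,T+I$ we have $D_{T+j,i}=d_{T,i}+j$.

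Next I would apply the renewal--reward theorem. Summing the AoI over one cycle gives $\sum_{j=1}^{I}(d_{T,i}+j)=I\,d_{T,i}+\tfrac{1}{2}I(I+1)$. Because $d_{T,i}$ is a function of the arrivals up to slot $T$ while $I$ is a function of the scheduling/transmission outcomes after slot $T$, the two are independent, so
\begin{equation}
\bar D_i=\frac{\mathbb{E}[I]\,\mathbb{E}[d_{T,i}]+\tfrac{1}{2}\mathbb{E}[I(I+1)]}{\mathbb{E}[I]}=\mathbb{E}[d_{T,i}]+\frac{\mathbb{E}[I(I+1)]}{2\,\mathbb{E}[I]}.
\end{equation}
For the geometric $I$ one has $\mathbb{E}[I]=1/q_i$ and $\mathbb{E}[I^2]=(2-q_i)/q_i^2$, whence $\mathbb{E}[I(I+1)]=2/q_i^2$ and the second term collapses to exactly $1/q_i=1/(p_i\mu_i)$. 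It remains to evaluate $\mathbb{E}[d_{T,i}]$: since the true local age $\{d_{t,i}\}$ is, by \eqref{Trans_d}, a positive-recurrent Markov chain evolving independently of the schedule, with stationary distribution $\Pr(d=n)=\lambda_i(1-\lambda_i)^{n-1}$ and mean $1/\lambda_i$, and since the reception epochs are selected independently of the arrivals, the age sampled at a reception is unbiased and $\mathbb{E}[d_{T,i}]=1/\lambda_i$. Combining the two terms yields $\bar D_i=\frac{1}{\lambda_i}+\frac{1}{p_i\mu_i}$, and substituting into the decomposition above gives \eqref{prop1Eq}.

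The step I expect to be the main obstacle is the rigorous justification of $\mathbb{E}[d_{T,i}]=1/\lambda_i$: one must argue carefully that conditioning on slot $T$ being a reception epoch does not bias the local age (an independence/PASTA-type argument), and that the transient started from $d_{0,i}=1$ washes out so that the Ces\`aro average $\lim_{T}\frac{1}{T}\sum_{t}\mathbb{E}[d_{t,i}]$ indeed equals the stationary mean. An equivalent and slightly shorter route, which sidesteps the cycle bookkeeping, is to write the one-step recursion $\mathbb{E}[D_{t+1,i}]=q_i(\mathbb{E}[d_{t,i}]+1)+(1-q_i)(\mathbb{E}[D_{t,i}]+1)$ and solve for its fixed point; this again hinges on the same stationarity facts for $\{d_{t,i}\}$.
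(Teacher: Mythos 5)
Your proof is correct and follows essentially the same route as the paper's: the paper establishes this proposition only by reference to the renewal-type analysis of \cite[Prop.~4]{8933047}, which likewise rests on geometric inter-delivery times with success parameter $p_i\mu_i$ per slot and the stationary mean $1/\lambda_i$ of the local-age chain. Your self-contained renewal--reward derivation (and the one-step recursion variant you mention) correctly carries out the adaptation to this paper's local-age evolution, which is precisely the ``slight difference'' from \cite{8933047} that the paper remarks on without proof.
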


Note that the EWSAoI in \eqref{prop1Eq} is slightly different from that in \cite{8933047} due to the difference in the local age evolution of two systems.
Denote by $\left\lbrace \mu^*_i\right \rbrace^N_{i=1} $ the optimal scheduling probabilities of all node, the optimal RS policy is given as follows \cite[Th. 5]{8933047}.
\begin{theorem}
	Consider the network under the RS policy. The optimal scheduling probabilities are 
	\begin{equation}
		\mu^*_i=\dfrac{\sqrt{\omega_i/p_i} }{\sum_{j=1}^{N}\sqrt {\omega_j/p_j}},
	\end{equation}
	and correspondingly,
	\begin{equation}
		R^{{RS}^*}=\frac{1}{N}\left[ \sum_{i-1}^{N}\dfrac{\omega_i}{\lambda_i}+\left( \sum_{i-1}^{N}\sqrt{\dfrac{\omega_i}{p_i}}\right)^2 \right]. 
	\end{equation}
\end{theorem}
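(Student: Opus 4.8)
The plan is to recognize this as a standard convex program and solve it via the Karush--Kuhn--Tucker (KKT) conditions. Starting from the EWSAoI expression in \textbf{\textit{Proposition} \ref{prop1}}, namely $R^{RS}=\frac{1}{N}\sum_{i=1}^{N}\omega_i\left(\frac{1}{\lambda_i}+\frac{1}{p_i\mu_i}\right)$, I would first observe that the term $\frac{1}{N}\sum_{i=1}^N \omega_i/\lambda_i$ does not depend on the decision variables $\{\mu_i\}$. Hence minimizing $R^{RS}$ over the feasible set $\{\mu_i\in(0,1],\ \sum_{i=1}^N\mu_i\le 1\}$ is equivalent to minimizing $f(\bm{\mu})=\sum_{i=1}^N \frac{\omega_i}{p_i\mu_i}$ subject to the same constraints.

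Next, since each summand $\frac{\omega_i}{p_i\mu_i}$ is convex and strictly decreasing in $\mu_i$ (because $\omega_i,p_i>0$), the objective is convex and the feasible region is convex, so any point satisfying the KKT conditions is the global minimizer. As $f$ is strictly decreasing in every coordinate, the budget constraint must be active at the optimum, i.e., $\sum_{i=1}^N\mu_i=1$. I would then form the Lagrangian $\mathcal{L}=\sum_{i=1}^N\frac{\omega_i}{p_i\mu_i}+\nu\left(\sum_{i=1}^N\mu_i-1\right)$ and set $\partial\mathcal{L}/\partial\mu_i=-\frac{\omega_i}{p_i\mu_i^2}+\nu=0$, which yields $\mu_i\propto\sqrt{\omega_i/p_i}$. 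Imposing $\sum_i\mu_i=1$ fixes the normalization and gives $\mu_i^*=\frac{\sqrt{\omega_i/p_i}}{\sum_{j=1}^N\sqrt{\omega_j/p_j}}$, matching the claimed optimal probabilities.

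I would then verify feasibility: since each $\mu_i^*$ is a single term divided by a sum that already contains it, we have $0<\mu_i^*\le 1$, so the box constraints $\mu_i\le 1$ are inactive and need not be invoked, confirming that the interior stationary point is the genuine optimum. Finally, substituting $\mu_i^*$ back, I would compute $\frac{1}{p_i\mu_i^*}=\frac{\sum_j\sqrt{\omega_j/p_j}}{\sqrt{p_i\omega_i}}$ and hence $\sum_i\frac{\omega_i}{p_i\mu_i^*}=\left(\sum_j\sqrt{\omega_j/p_j}\right)\sum_i\sqrt{\omega_i/p_i}=\left(\sum_{i=1}^N\sqrt{\omega_i/p_i}\right)^2$, which, together with the constant term $\sum_i \omega_i/\lambda_i$, produces the claimed $R^{{RS}^*}$.

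The main obstacle here is more bookkeeping than conceptual, since the argument is a textbook convex program. The only genuine care points are (i) justifying rigorously that the budget constraint is active via the strict monotonicity of $f$ rather than merely assuming it, and (ii) carrying out the final back-substitution cleanly, where the simplifications $p_i\sqrt{\omega_i/p_i}=\sqrt{p_i\omega_i}$ and $\omega_i/\sqrt{p_i\omega_i}=\sqrt{\omega_i/p_i}$ must be handled carefully so that the double sum collapses exactly into the perfect square $\left(\sum_{i=1}^N\sqrt{\omega_i/p_i}\right)^2$.
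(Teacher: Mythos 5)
Your proposal is correct, and the algebra checks out: the $\sum_i\omega_i/\lambda_i$ term is indeed independent of $\{\mu_i\}$, the stationarity condition $\mu_i\propto\sqrt{\omega_i/p_i}$ follows from the Lagrangian, and the back-substitution collapses to the perfect square $\left(\sum_{i=1}^N\sqrt{\omega_i/p_i}\right)^2$ exactly as you claim. The only structural difference from the paper is that the paper offers no proof at all for this theorem: it imports the result wholesale by citing Theorem 5 of the reference on downlink scheduling (noting only that the $1/\lambda_i$ term in its Proposition 2 differs from that reference due to the modified local-age evolution, which is harmless precisely because that term is constant in $\{\mu_i\}$ — the observation you make at the outset). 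Your KKT derivation is the standard argument underlying that citation, so in effect you have supplied a self-contained proof of a step the paper delegates to prior work; this buys independence from the external reference at essentially no extra cost, and your explicit handling of the active budget constraint and the inactive box constraints $\mu_i\le 1$ is the right level of rigor for that argument.
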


According to \cite[Th.10]{8933047}, $R^{RS*}$ is the upper bound of the EWSAoI performance of the FOMW policy, denoted by $R^{FOMW}$.
The FOMW policy can be regarded as the full-observed counterpart of the POMW policy.

\subsection{Upper Bounds of the POMW Policy}
Built upon the proposed LOC belief-MDP, we now derive two upper bounds for the POMW policy. 
One of them is the EWSAoI performance of a particular RS policy, as stated in the following theorem:
\begin{theorem}\label{UBTh}
	The EWSAoI performance of the POMW policy with $\beta_i=\omega_i/\lambda_i\mu'_ip_i,\forall i$, denoted by $R^{POMW}$, is upper bounded by
	\begin{equation}\label{UBPOMW}
		R^{POMW}\le\frac{1}{N}\sum_{i=1}^{N}\omega_i\left( \dfrac{1}{\lambda_i\mu'_ip_i}+1\right) \le R^{RSM},
	\end{equation}
	where
	\begin{equation}
		\mu'_i=\dfrac{\sqrt{\omega_i/\lambda_ip_i}}{\sum_{j=1}^{N}\sqrt{\omega_j/\lambda_jp_j}}, \forall
		i,
	\end{equation}
	are a series of scheduling probabilities of all nodes in the network. $R^{RSM}$ is the EWSAoI of an RS policy with the corresponding scheduling probabilities
	\begin{equation}
		\mu^M_i=\dfrac{\sqrt{\omega_i\lambda_i/p_i}}{\sum_{j=1}^{N}\sqrt{\omega_j/\lambda_j p_j}}, \forall
		i.
	\end{equation}
\end{theorem}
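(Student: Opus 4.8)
The plan is to bound $R^{POMW}$ by a Lyapunov drift comparison against a suitably chosen randomized schedule, and then reduce the resulting closed form to the RS performance by elementary algebra. First I would reuse the conditional-drift identity \eqref{EWSAoINS}: for any action, $\mathbb{E}[L(t+1)\mid\bm{B}_t]=\frac{1}{N}\left[-\sum_i a_{t,i}\beta_i p_i G_{t,i}+\sum_i\beta_i(D_{t,i}+1)\right]$, and recall from \textbf{Proposition~\ref{ProPOMW}} that POMW picks the action minimizing this quantity. I would take as the comparison policy the RS policy with $\mu^M_i=\lambda_i\mu'_i$; this is feasible since $\sum_i\mu^M_i=\sum_i\lambda_i\mu'_i\le\sum_i\mu'_i=1$, and the theorem's choice $\beta_i=\omega_i/(\lambda_i\mu'_i p_i)$ is exactly the one making $\mu^M_i\beta_i p_i=\omega_i$. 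Because POMW minimizes the drift state-by-state, $\mathbb{E}[L(t+1)\mid\bm{B}_t]^{POMW}\le\frac{1}{N}\left[-\sum_i\omega_i G_{t,i}+\sum_i\beta_i(D_{t,i}+1)\right]$, so subtracting $L(t)=\frac1N\sum_i\beta_i D_{t,i}$ and taking expectations yields $\mathbb{E}[L(t+1)]-\mathbb{E}[L(t)]\le\frac{1}{N}\left[\sum_i\beta_i-\sum_i\omega_i\mathbb{E}[G_{t,i}]\right]$.

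Next I would telescope this inequality over $t=1,\dots,T$, divide by $T$, and drop the nonnegative term $\mathbb{E}[L(T+1)]$. Provided POMW is positive recurrent so that $\mathbb{E}[L(T+1)]/T\to0$, this gives $\limsup_{T\to\infty}\frac{1}{NT}\sum_{t=1}^{T}\sum_i\omega_i\mathbb{E}[G_{t,i}]\le\frac{1}{N}\sum_i\beta_i=\frac{1}{N}\sum_i\frac{\omega_i}{\lambda_i\mu'_i p_i}$, which furnishes the first term of the target bound.

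To convert the $G$-average into an AoI-average I would use $G_{t,i}=D_{t,i}-\sum_d b_{t,i}(d)d$ from \eqref{G}. Since the local age evolves as an action-independent Bernoulli renewal, the tower property gives $\mathbb{E}[\sum_d b_{t,i}(d)d]=\mathbb{E}[d_{t,i}]$, whose Ces\`aro mean converges to the stationary value $1/\lambda_i$; hence $R^{POMW}=\lim_T\frac{1}{NT}\sum_{t}\sum_i\omega_i\mathbb{E}[G_{t,i}]+\frac{1}{N}\sum_i\omega_i/\lambda_i$. Combined with the previous step, this already delivers the outer bound $R^{POMW}\le R^{RSM}$. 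Sharpening the local-age contribution from $1/\lambda_i$ down to the constant $1$ that appears in the stated intermediate bound is the delicate part: I would exploit the exact per-node balance obtained by telescoping \eqref{eqAoI}, namely $p_i\,\mathbb{E}_{ss}[a_{t,i}G_{t,i}]=1$ in steady state, together with the Max-Weight ordering $\beta_i p_i G_{t,i}\le\max_j\beta_j p_j G_{t,j}$ and the closed form $G_{t,i}=m_{t,i}+[1-(1-\lambda_i)^{m_{t,i}}](k_{t,i}-1/\lambda_i)$ from the LOC representation in \textbf{Corollary~\ref{corokm}}, so as to replace the crude stationary average $1/\lambda_i$ by the tighter value dictated by the scheduling instants.

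Finally, the second inequality is purely algebraic: evaluating \eqref{prop1Eq} at $\mu^M_i$ and using $1/(p_i\mu^M_i)=1/(\lambda_i\mu'_i p_i)$ shows that $R^{RSM}$ and the intermediate expression share the identical first term $\frac{1}{N}\sum_i\omega_i/(\lambda_i\mu'_i p_i)$ and differ only by $\frac{1}{N}\sum_i\omega_i(1/\lambda_i-1)\ge0$ since every $\lambda_i\le1$, which gives $\frac{1}{N}\sum_i\omega_i\left(1/(\lambda_i\mu'_i p_i)+1\right)\le R^{RSM}$. I expect the main obstacle to be twofold: rigorously certifying the stability/positive recurrence of POMW so that the telescoped boundary term $\mathbb{E}[L(T+1)]/T$ vanishes and the steady-state expectations exist, and carrying out the refinement of the third paragraph that upgrades the $1/\lambda_i$ local-age penalty to the sharp constant $1$, since the plain drift comparison on its own only yields the looser $R^{RSM}$.
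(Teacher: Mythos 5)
Your drift-comparison skeleton matches the paper's own proof: compare POMW's per-state drift against the RS policy with probabilities $\mu^M_i=\lambda_i\mu'_i$, exploit $\mu^M_i\beta_ip_i=\omega_i$, telescope, and both your feasibility check $\sum_i\mu^M_i\le\sum_i\mu'_i=1$ and your closing algebraic identity $1/(p_i\mu^M_i)=1/(\lambda_i\mu'_ip_i)$ are correct. However, what you actually prove is only the outer inequality $R^{POMW}\le R^{RSM}$; the first, tighter inequality $R^{POMW}\le\frac{1}{N}\sum_i\omega_i\left(1/(\lambda_i\mu'_ip_i)+1\right)$, which is the substantive content of the theorem, is left open, and you acknowledge this. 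The root cause is that you convert $\mathbb{E}[G_{t,i}]$ into an AoI average through the tower property and the stationary mean $\frac{1}{T}\sum_t\mathbb{E}[d_{t,i}]\to 1/\lambda_i$; any argument routed through this average can only produce the penalty $1/\lambda_i$, never the constant $1$.

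The missing idea is a \emph{pointwise} (per belief state) bound rather than an average one. The paper uses the LOC structure: every reachable belief state satisfies $\sum_d b_{t,i}(d)d=\frac{1}{\lambda_i}+\left(k_{t,i}-\frac{1}{\lambda_i}\right)(1-\lambda_i)^{D_{t,i}-k_{t,i}}$ with $k_{t,i}\in\{1,\ldots,D_{t,i}-1\}$, and the function $X(k)=\left(k-\frac{1}{\lambda_i}\right)(1-\lambda_i)^{D_{t,i}-k}$ is shown (by a ratio argument in the two sign regimes) to be increasing in $k$, so the worst case is $k_{t,i}=D_{t,i}-1$, $m_{t,i}=1$, giving $\sum_d b_{t,i}(d)d\le 1+(1-\lambda_i)(D_{t,i}-1)$, i.e., $G_{t,i}\ge\lambda_i(D_{t,i}-1)$ for \emph{every} feasible belief state. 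Plugging this into the drift inequality, where $\mu_ip_i\beta_i\lambda_i=\omega_i$, bounds the conditional drift by $\frac{1}{N}\sum_i(\beta_i+\omega_i)-\frac{1}{N}\sum_i\omega_iD_{t,i}$, and telescoping then yields the constant $+1$ directly, with no steady-state assumption at all. Your proposed repair via the balance identity $p_i\,\mathbb{E}_{ss}[a_{t,i}G_{t,i}]=1$ does not close the gap: the drift comparison involves $\sum_i\mu^M_ip_i\beta_iG_{t,i}$, i.e., the $G_{t,i}$ of all nodes weighted by fixed RS probabilities, whereas the balance identity constrains $G_{t,i}$ only at the instants node $i$ is actually scheduled by POMW, and it presupposes exactly the steady state whose existence you flag as unproven. (A minor point in your favor: the positive-recurrence worry in your telescoping step is unnecessary, since $\mathbb{E}[L(T+1)]\ge 0$ enters with the favorable sign, exactly as in the paper.)
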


\begin{proof}
	We prove it by leveraging the introduced RS policy. See Appendix \ref{appD} for details.
\end{proof}
Note that the value assigned to $\beta_i$, which depends on $\mu'_i$, can be attained by minimizing the upper bound of the EWSAoI, given by
$\frac{1}{N}\sum_{i=1}^{N}\omega_i\left( 1/(\lambda_i\mu'_ip_i)+1\right)$.
A similar method was used in \cite{neely2010stochastic}.

\subsection{Performance Guarantee of the POMW Policy}

Based on \textbf{\textit{Theorem} \ref{UBTh}}, we can analyze the performance guarantee of the POMW policy theoretically. Firstly, we introduce a universal lower bound given in \cite[Th.3]{8933047}.
This lower bound applies to any feasible scheduling policy, and is applicable to both FONs and PONs. 
The universal lower bound of the EWSAoI performance of any policies is given~by
\begin{alignat}{2}
	L_B=\min\limits_{\left\lbrace q_i\right\rbrace^N_{i=1} }\quad & 
	\frac{1}{2N}\sum_{i=1}^{N}\omega_i\left( \frac{1}{q_i}+3\right),\label{LB1}\\
	\mbox{s.t.,}\quad & \sum_{i=1}^{N}q_i/p_i\le 1, \label{LB2}\\
	& q_i\le\lambda_i,\forall i.\label{LB3}
\end{alignat}	
The solution $q^*_i$ of the above problem can be obtained following \cite[Algorithm 1]{8933047}, and the lower bound is
\begin{equation}
  L_B= \frac{1}{2N}\sum_{i=1}^{N}\omega_i\left( \frac{1}{q^*_i}+3\right).
\end{equation}

\begin{corollary}\label{RLR}
	The performance of the POMW policy with $\beta_i=\omega_i/\lambda_iq^*_i$ follows that 
	\begin{equation}
		\dfrac{R^{POMW}}{L_B}<\dfrac{2}{\lambda_{min}},
	\end{equation}
	where $\lambda_{min}\triangleq \min\left\lbrace\lambda_i \right\rbrace^N_{i=1} $. 
\end{corollary}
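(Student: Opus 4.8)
The plan is to chain the upper bound of Theorem~\ref{UBTh} with the universal lower bound $L_B$ and then control the resulting ratio by a weighted-mediant inequality. First I would return to the proof of Theorem~\ref{UBTh} (Appendix~\ref{appD}) and read off the bound in its \emph{general} form, namely that for the POMW policy tuned with $\beta_i=\omega_i/(\lambda_i\mu_ip_i)$ for \emph{any} admissible scheduling-probability vector $\{\mu_i\}$ (that is, $\mu_i\in(0,1]$ and $\sum_i\mu_i\le1$), one has
\[
R^{POMW}\le\frac{1}{N}\sum_{i=1}^{N}\omega_i\left(\frac{1}{\lambda_i\mu_ip_i}+1\right).
\]
The particular $\mu'_i$ stated in Theorem~\ref{UBTh} is only the minimizer of the right-hand side; the inequality itself is valid for every feasible $\{\mu_i\}$, because the slot-wise drift minimized by POMW dominates the drift of any randomized stationary policy. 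I would then specialize this to $\mu_i=q^*_i/p_i$, where $q^*_i$ solves the lower-bound program \eqref{LB1}--\eqref{LB3}.

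For this specialization to be legitimate I must check feasibility: constraint \eqref{LB2} gives $\sum_i q^*_i/p_i\le1$, so $\{q^*_i/p_i\}$ is a valid scheduling-probability vector, and since $\omega_i>0$ forces $q^*_i>0$ the entries lie in $(0,1]$. With this choice the tuning becomes $\beta_i=\omega_i/(\lambda_i(q^*_i/p_i)p_i)=\omega_i/(\lambda_iq^*_i)$, exactly the hypothesis of the corollary, and the general bound collapses to $R^{POMW}\le\frac{1}{N}\sum_i\omega_i\left(1/(\lambda_iq^*_i)+1\right)$.

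Finally, recalling $L_B=\frac{1}{2N}\sum_i\omega_i\left(1/q^*_i+3\right)$, I would form
\[
\frac{R^{POMW}}{L_B}\le\frac{2\sum_i\omega_i\left(\frac{1}{\lambda_iq^*_i}+1\right)}{\sum_i\omega_i\left(\frac{1}{q^*_i}+3\right)}
\]
and bound the right-hand side by the elementary fact that $x_i/y_i<c$ for all $i$ (with $y_i>0$) implies $\sum_i\omega_ix_i/\sum_i\omega_iy_i<c$. Taking $x_i=1/(\lambda_iq^*_i)+1$, $y_i=1/q^*_i+3$ and $c=1/\lambda_{min}$, the per-term inequality $\lambda_{min}x_i<y_i$ reduces to $\frac{\lambda_{min}}{\lambda_iq^*_i}+\lambda_{min}<\frac{1}{q^*_i}+3$, which holds because $\lambda_{min}/\lambda_i\le1$ and $\lambda_{min}\le1<3$, the strictness being secured by $\lambda_{min}<3$. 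The factor $2$ then yields $R^{POMW}/L_B<2/\lambda_{min}$. The main obstacle is the first step: the corollary invokes a $\beta_i$ different from the one fixed in Theorem~\ref{UBTh}, so I must confirm that the appendix argument indeed establishes the bound uniformly over all feasible $\{\mu_i\}$ rather than only for the minimizing $\mu'_i$; everything downstream is routine algebra.
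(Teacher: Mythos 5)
Your proposal is correct and follows essentially the same route as the paper's own proof: both substitute the feasible choice $\mu_i=q^*_i/p_i$ (feasible by constraint \eqref{LB2}) into the general upper bound \eqref{UBMW} with $\beta_i=\omega_i/(\lambda_i\mu_ip_i)=\omega_i/(\lambda_iq^*_i)$, and then conclude by the same termwise comparison $\frac{1}{\lambda_iq^*_i}+1<\frac{1}{\lambda_{min}q^*_i}+\frac{3}{\lambda_{min}}$ (your weighted-mediant step is just this comparison summed with weights $\omega_i$). Your explicit verification that the bound in Appendix B holds uniformly over all feasible $\{\mu_i\}$, not only the minimizer $\mu'_i$, is exactly the (implicit) justification the paper relies on.
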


\begin{proof}
	See Appendix \ref{appE}
\end{proof}

\begin{remark}
We use the ratio between $R^{POMW}$ and $L_B$ to evaluate the performance guarantee of the POMW policy. \textbf{Corollary \ref{RLR}} indicates that the ratio is inversely proportional to the packet arrival rates of nodes $\lambda_i$ in the network. When the network is close to the ``generate-at-will'', i.e., $\lambda_i\to 1,\forall i$, the ratio of $R^{POMW}$ and $L_B$ with $\beta_i=\omega_i/\lambda_iq^*_i$ is smaller than $2$. This coincides with the performance guarantee of
a counterpart FOMW policy devised for the ``generate-at-will'' system in \cite{sun2019age}. 
\end{remark}

\section{Numerical Results}

In this section, we first compare the proposed POMW policy with its fully observable counterparts. We then verify the theoretical analyses on the POMW policy.
Finally, we compare the performance of the POMW policy with that of two baseline policies in PONs.

\subsection{Comparisons with Fully Observable Counterparts}

The EWSAoI performances of the POMW and FOMW policies are obtained via $2000$ Monte-Carlo simulation runs. 
$\beta_i$ is set as $\omega_i/\lambda_i\mu'_ip_i$ for both of the two policy.
The AoI performance of the RS policies, the universal lower bound $L_B$, and the upper bound of $R^{POMW}$ are computed using \eqref{prop1Eq}, \eqref{LB1}, and \eqref{UBMW}, respectively. 
 
\begin{figure}[t]
	\centering
	\includegraphics[width=0.48\textwidth]{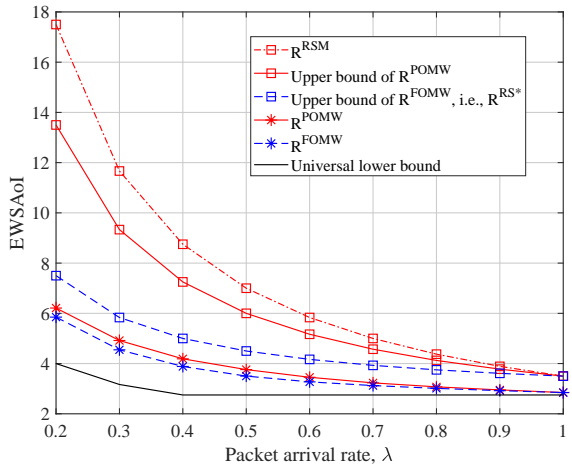}
	\caption{EWSAoI performance versus an increasing packet arrival rate, where $N=2$, $p_1=p_2=0.8$, and $\omega_1=\omega_2=1$.}
	\label{AoI_V_La}
\end{figure}

In Fig. \ref{AoI_V_La}, we illustrate the EWSAoI of
the POMW policy, its corresponding upper bounds, and the universal lower bound $L_B$ with increasing packet arrival rate. The $R^{FOMW}$ in the FON and its corresponding upper bound, i.e., the optimal RS policy, are given as benchmarks. We set $N=2$, $\omega_1=\omega_2=1$, $p_1=p_2=0.8$, $\lambda_1=\lambda_2=\lambda$, $D=20$, and $T=100$. Fig. \ref{AoI_V_La} shows that all curves decrease as $\lambda$ increases.
This is intuitive because the EWSAoI decreases when the status update packets arrive at nodes more frequently.
Furthermore, the value of the universal lower bound is the smallest, the value of $R^{RSM}$ is the largest, and $R^{POMW}$ and $R^{FOMW}$ are lower than their corresponding upper bounds, respectively. These relationships
validate the analysis given in the previous section.
Fig. \ref{AoI_V_La} also shows that $R^{POMW}$ is larger than $R^{FOMW}$, and the upper bound of $R^{POMW}$ is larger than that of $R^{FOMW}$.
This is intuitive because the POMW policy in the PON only knows the packet arrival rate and some occasional observations,
while
the FOMW policy in
the FON can utilize the
fully observed state information.
Furthermore, the gap between the $R^{FOMW}$ and its upper bound decreases slowly as $\lambda$ increases, while the gap between the $R^{POMW}$ and its upper bound decreases much quickly.
This phenomenon can be explained by refering to the expressions of the upper bounds, given by \eqref{prop1Eq} and \eqref{UBPOMW}. 
It is obvious that $1/\lambda_i$ in \eqref{UBPOMW} always has larger coefficient than that of \eqref{prop1Eq}.
Hence, the upper bound of $R^{POMW}$ increases faster than that of $R^{FOMW}$ with the decrease of $\lambda$.
Moreover, the EWSAoI performances of the POMW and FOMW polices and their corresponding upper bounds converge to the same value when $\lambda = 1$. This is because both the PON and FON approach to the ``generate-at-will'' model when $\lambda$ tends to $1$.

\begin{figure}[t]
	\centering
	\includegraphics[width=0.48\textwidth]{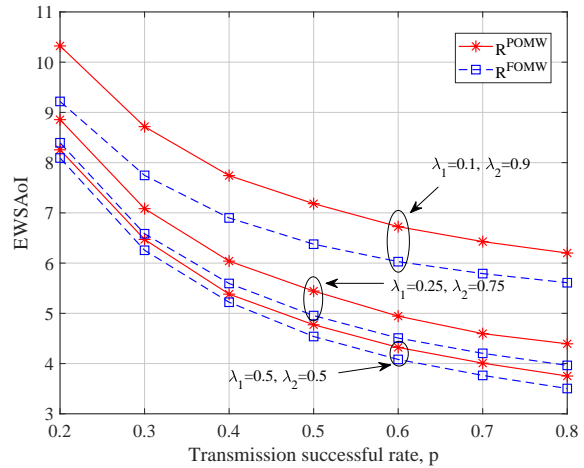}
	\caption{$R^{POMW}$ and $R^{FOMW}$ versus transmission successful rate with different combinations of $\lambda_i$ in PON and FON, where $N=2$, and $\omega_1=\omega_2=1$.}
	\label{AoI_V_DL}
\end{figure}

In Fig. \ref{AoI_V_DL}, we compare $R^{FOMW}$ and $R^{POMW}$ versus transmission successful rate $p$ with three pairs of packet arrival rates $\{ \lambda_1, \lambda_2 \}$.
We set $p_1=p_2=p$, $\omega_1=\omega_2=1$, and three pairs of packet arrival rates $\lambda_1=\lambda_2=0.5$; $\lambda_1=0.25$, $\lambda_2=0.75$; and $\lambda_1=0.1$, $\lambda_2=0.9$.
Fig. \ref{AoI_V_DL} shows that the gap between $R^{FOMW}$ and $R^{POMW}$ becomes larger when the gap between two $\lambda_i$s increase. 
This can be explained by combining \textbf{\textit{Corollary} \ref{RLR}} and \cite[Th.5]{8933047}. \textbf{\textit{Corollary} \ref{RLR}} indicates that the performance guarantee of $R^{POMW}$ is inversely proportional to $\lambda_{min}$, and \cite[Th.5]{8933047} indicates that the performance guarantee of the fully observed counterpart is a constant. A larger arrival rate gap results in a smaller $\lambda_{min}$, and consequently a larger gap between the $R^{POMW}$ and $R^{FOMW}$. 
Fig. \ref{AoI_V_DL} also shows that
$R^{POMW}$ and $R^{FOMW}$ decreases in all cases when the packet transmission rate $p$ increases. This is because the destination AoI $D_{t,i}$ drops to the local age $d_{t,i}+1$ more frequently with larger $p$. 



\begin{figure}[t]
	\centering
	\includegraphics[width=0.48\textwidth]{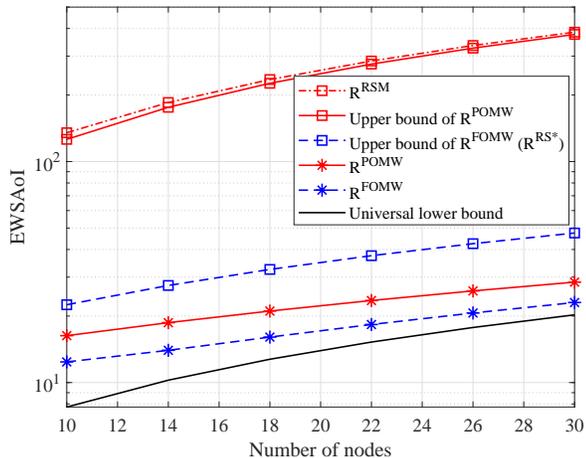}
	\caption{EWSAoI performance as the number of nodes increases, where $\lambda_1=\lambda_2=0.1$, $p_1=p_2=0.8$, and $\omega_1=\omega_2=1$.}
	\label{AoI_V_K}
\end{figure}

\begin{figure}[t]
	\centering
	\includegraphics[width=0.48\textwidth]{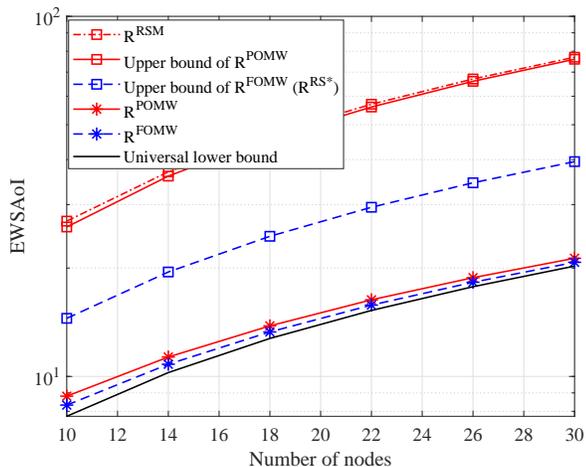}
	\caption{EWSAoI performance as the number of nodes increases, where $\lambda_1=\lambda_2=0.5$, $p_1=p_2=0.8$, and $\omega_1=\omega_2=1$.}
	\label{AoI_V_K1}
\end{figure}

In Fig. \ref{AoI_V_K}, we depict the EWSAoI of the POMW and FOMW policies with increased number of nodes, and compare them with corresponding upper bounds. We also include $R^{RSM}$ and the universal lower bound as benchmarks.
We set $\lambda_i=0.1$ and $p_i=0.8$ for all nodes, and increase the number of nodes from $10$ to $30$.
It is shown in Fig. \ref{AoI_V_K} that all curves increase as number of nodes increases.
This is because each node has fewer chances to be scheduled when the number of nodes increases, thus its AoI has fewer chances to decrease.
Fig. \ref{AoI_V_K1} plots the same set of curves as in Fig. \ref{AoI_V_K}, where the values of $\lambda_1$ and $\lambda_2$ are both set to be $0.5$.
Fig. \ref{AoI_V_K1} shows similar phenomenon observed in Fig. \ref{AoI_V_K}. 
Furthermore, the performance of all schemes improves from Fig. \ref{AoI_V_K} to Fig. \ref{AoI_V_K1}, which is expected since the packet arrival rates are increased.

\subsection{Comparison with Baseline Policies}
In the following, we show the advantages of the proposed POMW policy in PONs over two
baseline policies.
The baseline policies are described as follows:
\begin{itemize}
  \item [1)] \textbf{\textit{Round Robin (RR) policy:}} In the RR policy, nodes are scheduled by the AP in a circular order to ensure a fair scheduling opportunity among the nodes.
  \item [2)] \textbf{\textit{Max weighted AoI (MWA) policy:}} The MWA policy does not need the knowledge of nodes' local age. 
  Specifically, the AP always schedules the node $j$ with 
  $j=\arg \max\limits_{i}\omega_ip_iD_{t,i}$.
\end{itemize}
In the following figures, the EWSAoI performance of all policies are obtained via $10000$ Monte-Carlo simulation runs.

\begin{figure}[t]
	\centering
	\includegraphics[width=0.48\textwidth]{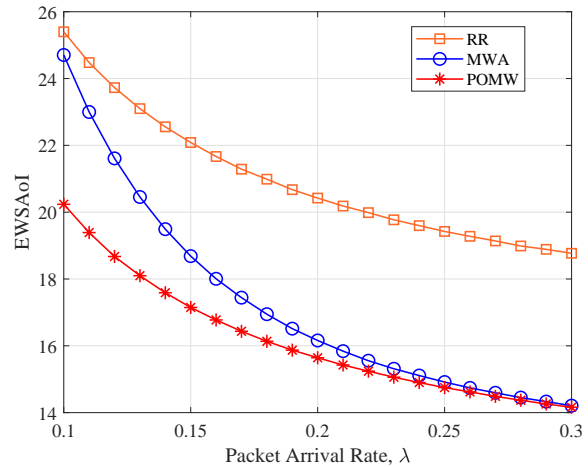}
	\caption{EWSAoI performance versus packet arrival rate $\lambda$ in the PON with $N=10$, and $\omega_i=1$, $p_i=0.5$, $\forall i$.}
	\label{Baseline_B}
\end{figure}

Fig. \ref{Baseline_B} shows the EWSAoI performance of the POMW policy, the MWA policy, and the RR policy in a symmetric PON. The weight coefficients $\omega_i$ and transmission successful rates $p_i$ of all nodes are the same.
We can observe from Fig. \ref{Baseline_B} that
the RR policy has the worst performance.
This is intuitive because the RR policy does not use the observations of the network states, while the MWA and POMW policies make decisions depending on the observations.
Moreover, 
the POMW policy is superior to the MWA policy when $\lambda$ is small, but the EWSAoI performances of these two policies tend to coincide when $\lambda \ge 0.25$.
This is owing to the fact that the MWA policy only leverages the observations of the destination AoI, while the POMW policy uses the observations of both the destination AoI and local age.
Furthermore, in a symmetric PON, the expected local ages of all nodes tend to be symmetric as $\lambda$ grows. 
In this case, the POMW policy and the MWA policy becomes equivalent.

\begin{figure}[t]
	\centering
	\includegraphics[width=0.48\textwidth]{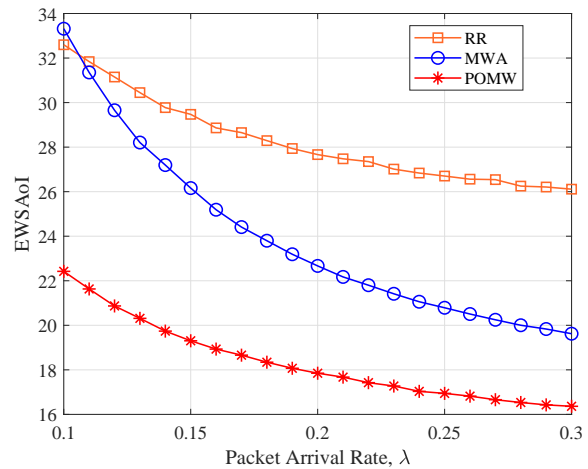}
	\caption{EWSAoI performance versus packet arrival rate $\lambda$ in the PON with $N=10$, and $\omega_i\sim \mathcal{U}(0.1,1.9)$, $p_i\sim \mathcal{U}(0.1,0.9)$, $\forall i$. Note that the notation $\mathcal{U}(a,b)$ denotes a uniform distribution over the real number interval $[a,b]$.}
	\label{Baseline_I}
\end{figure}

Fig. \ref{Baseline_I} plots the long-term EWSAoI curves of all three policies as in Fig. \ref{Baseline_B} over asymmetric PONs.
The transmission successful rates and weight coefficients of all nodes are randomly drawn from uniform distributions in each simulation run.
We can see that in this case, the performance of POMW policy clearly outperforms that of the RR and MWA policies.
Furthermore, the performance of the MWA policy cannot approach to that of the POMW policy even when the arrival rate increases to $0.3$.
This is because
the expected local ages are not symmetric in an asymmetric PON and the MWA policy does not consider this information.

\section{Conclusions}

In this paper, we investigated the AoI-oriented scheduling problem for a wireless multiuser uplink network. Due to the partial observations of the local ages at end devices, we formulated the scheduling decision-making problem as a partially observable Markov decision process (POMDP). The POMDP was first reformulated to an equivalent belief-MDP, and then simplified to an Last-Observation-Characterized (LOC) belief-MDP by adequately leveraging the properties of the status update arrival processes. With the simplification, the infinite dimensional belief states can be characterized by two-dimensional vectors, and thus the complexity of belief updates is significantly reduced. On this basis, we devised the Partially Observable Max-Weight (POMW) policy that minimizes the expected weighted sum AoI of the next slot on condition of the current belief state. Based on the LOC belief-MDP, we derived upper bounds for the performance of the proposed POMW policy. Moreover, we evaluated the performance guarantee of the POMW policy by comparing its performance with a universal lower bound available in the literature. Finally, simulation results validated our analyses, illustrating that the performance gap between the proposed POMW policy and its fully observable counterpart is proportional to the inverse of the lowest arrival rate.
The simulation results also validated the superiority of the POMW policy over the baseline policies.

Future work includes the development of a Whittle's index-based policy for the considered scheduling problem, the extension to the scenarios where the packet arrival rates at end nodes are not known a priori, as well as the extension to more recent information freshness metrics (e.g., AoI at Query \cite{9676636}).

\appendices

\section{Proof of Proposition \ref{prop_km}}\label{appA}
	 \textbf{Proposition \ref{prop_km}} can be proved by induction. First, we show that 
	\begin{equation}\label{Belief_km}
		\bm{c}(k,1)=\left[ \lambda,0,\cdots,0,1-\lambda,0,\cdots\right], 
	\end{equation}
	where $(1-\lambda)$ is the $(k+1)$-th entry if we suppose $\hat{d}_t$ equals to $k$. Then, according to \eqref{eqAoI}, the destination AoI $D_{t+1}=k+1$ because there is no packet received from the node in slot $t$ and $d_t=k$. This satisfies \textbf{Proposition \ref{prop_km}}.
	
	Suppose $\bm{c}(k,m)$ satisfies \eqref{ckm}, i.e.,
	\begin{equation}
		\bm{c}(k,m) =\left[ \lambda,\lambda\gamma,\lambda\gamma^2,\cdots,\lambda\gamma^{m-1},0,\cdots,0,\gamma^m,0,\cdots\right],
	\end{equation}
	where $\gamma^m$ is the $(k+m)$-th entry, and $D_{t+m}=k+m$. 
	Then, on one side, according to \eqref{Trans_d}, we have
	\begin{equation}\label{Belief_km+1}
		\begin{split}
			&\bm{c}(k,m+1)\\
			&=\left[b_{t+m+1}(1),\lambda\gamma,\lambda\gamma^2,\cdots,\lambda\gamma^{m},0,\cdots,0,\gamma^{m+1},0,\cdots\right]
		\end{split}
	\end{equation}
	with $\gamma^{m+1}$ being the $(k+m +1)$-th entry. Because \eqref{Belief_km+1} is a probability distribution, we have
	\begin{equation}
	  \begin{split}
	    b_{t+m+1}(1)&=1-\sum_{d>1}b_{t+m+1}(d)\\
	    &=1-\left(\lambda\sum^m_{l=1}\gamma^l+\gamma^{m+1}\right)\\
	    &=1-\gamma=\lambda.
	  \end{split}
	\end{equation} 
	On the other side, $D_{t+m+1}=k+m+1$, since there is no packet received from the node in slot $t+m$. Thus, the local age belief state and AoI in slot $t+m+1$ still satisfy \textbf{Proposition \ref{prop_km}}. The proposition is proved.

\section{Proof of Theorem \ref{UBTh}}\label{appD}
	According to the definition, the Max-Weight policy minimizes the expected sum AoI of the next slot under the condition of $\bm{B}_t$. Thus, the Randomized Scheduling policy with arbitrary feasible scheduling probabilities $\left\lbrace \mu_i\right\rbrace^N_{i=1} $ yields a higher (or equal) value of the expected sum AoI of the next slot. We then have
	\begin{equation}\label{DriftIeq}
		\begin{split}
			&\frac{1}{N}\sum_{i=1}^{N}\mathbb{E}\left[ \beta_i\left( D_{t+1,i}-D_{t,i}\right)\Bigg|\bm{B}_t \right] \\
			&\le \frac{1}{N}\sum_{i=1}^{N}\beta_i+\frac{1}{N}\sum_{i=1}^{N}\mu_ip_i\beta_i\left( \sum_{d\in\mathcal{D}}b_{t,i}(d)d-D_{t,i}\right),\forall t.
		\end{split} 
	\end{equation} 
	Now, we analyze the RHS of \eqref{DriftIeq}. Given $\lambda_i$, $p_i$, $\beta_i$ and $\bm{B}_t$, according to the LOC belief-MDP, we have
	\begin{equation}\label{kmL}
		\begin{split}
			\sum_{d\in\mathcal{D}}b_{t,i}(d)d&=\frac{1}{\lambda_i}+\left( k_{t,i}-\frac{1}{\lambda _i}\right)\left(1-\lambda_i \right)^{D_{t,i}-k_{t,i}},\\
		\end{split} 
	\end{equation}
	where $k_{t,i}\in \left\lbrace 1,2,\cdots,D_{t,i}-1\right\rbrace $.
	Define 
	\begin{equation}
		X(k)\triangleq\left( k-\frac{1}{\lambda _i}\right)\left(1-\lambda_i \right)^{D_{t,i}-k},
	\end{equation}
	where $k$ denotes a possible value of $k_{t,i}$. Clearly, $\left(1-\lambda_i \right)^{D_{t,i}-k}$ is always positive, we hence discuss the sign of the term $ k-\frac{1}{\lambda _i}$. 
	
	When $k-\frac{1}{\lambda _i}>0$, $X(k-1)<X(k)$ is equivalent to
	\begin{equation}\label{XRatio1}
		\dfrac{X(k-1)}{X(k)}=\left( 1-\lambda_i\right)\left(1-\dfrac{1}{k-1/\lambda_i} \right)<1. 
	\end{equation}
	Manipulating \eqref{XRatio1} and considering $k-\frac{1}{\lambda _i}>0$, we find that \eqref{XRatio1} holds if $k>\frac{1}{\lambda _i}$. As $k\in\mathbb{Z}^+$, $X(k)$ increases on $k=\left\lfloor \frac{1}{\lambda_i}\right\rfloor,\left\lceil \frac{1}{\lambda_i}\right\rceil,\left\lceil \frac{1}{\lambda_i}\right\rceil+1,\cdots$. 
	
	When $k-\frac{1}{\lambda _i}<0$, $X(k-1)<X(k)$ is equivalent to
	\begin{equation}\label{XRatio2}
		\left( 1-\lambda_i\right)\left(1-\dfrac{1}{k-1/\lambda_i} \right)>1. 
	\end{equation}
	Similarly, we obtain that $X(k)$ increases on $k=1,2,\cdots,\left\lfloor \frac{1}{\lambda_i}\right\rfloor$. Therefore, for any given $D_{t,i}>1$, $X(1)<X(2)<\cdots<X(D_{t,i}-1)$. This property also holds when $1/\lambda_i$ is an integer. Substituting $k_{t,i}=D_{1,i}-1$ into \eqref{DriftIeq} and \eqref{kmL} yields
	\begin{equation}\label{DriftIeq2}
		\begin{split}
			&\frac{1}{N}\sum_{i=1}^{N}\mathbb{E}\left[ \beta_i\left( D_{t+1,i}-D_{t,i}\right)\Bigg|\bm{B}_t \right] \\
			&\le \frac{1}{N}\sum_{i=1}^{N}\beta_i+\frac{1}{N}\sum_{i=1}^{N}\mu_ip_i\beta_i\left( \lambda_i-\lambda_iD_{t,i}\right),\forall t.
		\end{split} 
	\end{equation} 
	
	Taking the expectation of both sides of \eqref{DriftIeq2} with respect to $\bm{B}_t$, taking a sum over $t\in\left\lbrace 1,2,\cdots,T\right\rbrace $, and then taking the time-average, we have
	\begin{equation}\label{DriftIeq3}
		\begin{split}
			&\frac{1}{NT}\sum_{i=1}^{N}\mathbb{E}\left[ \beta_iD_{t+1,i}\right]- \frac{1}{NT}\sum_{i=1}^{N}\mathbb{E}\left[ \beta_iD_{1,i}\right]\\
			&\le \frac{1}{N}\sum_{i=1}^{N}\beta_i+\frac{1}{NT}\sum_{i=1}^{N}\sum_{t=1}^{T}\mu_ip_i\beta_i\mathbb{E}\left[ \lambda_i-\lambda_iD_{t,i}\right].
		\end{split} 
	\end{equation}
	Rearranging \eqref{DriftIeq3}, we take the limit as $T\to \infty$ and assign $\beta_i=\omega_i/\lambda_i\mu_ip_i$. Then, we have 
	\begin{equation}\label{UBMW}
		\begin{split}
			\lim_{T\to\infty}\frac{1}{NT}\sum_{i=1}^{N}\sum_{t=1}^{T}\omega_i\mathbb{E}\left[ D_{t,i}\right] &\le \frac{1}{N}\sum_{i=1}^{N}\dfrac{\omega_i}{\lambda_i\mu_ip_i}+\frac{1}{N}\sum_{i=1}^{N}\omega_i\\
			&\overset{(a)}{\le}\frac{1}{N}\sum_{i=1}^{N}\dfrac{\omega_i}{\lambda_i\mu_ip_i}+\frac{1}{N}\sum_{i=1}^{N}\frac{\omega_i}{\lambda_i},
		\end{split}
	\end{equation}
	where the inequality $(a)$ follows as $\lambda_i\in\left(0,1 \right],\forall i$. 
	
	The left hand side (LHS) of \eqref{UBMW} is the long-term EWSAoI of the network under the Max-Weight policy. Following the Lyapunov optimization, to minimize the upper bound, i.e., RHS of \eqref{UBMW}, with respect to $\mu_i$, we equivalently need to solve 
	
	\begin{equation}\label{pmw1}
	    \begin{split}
	    \min\limits_{\left\lbrace \mu_i\right\rbrace^N_{i=1} }\quad & 
		\sum_{i=1}^{N}\dfrac{\omega_i}{\lambda_i\mu_ip_i} ,\\
		\mbox{s.t.}\quad & \sum_{i=1}^{N}\mu_i\le 1. 
	    \end{split}
	\end{equation}
	
	By applying similar procedures presented in the proof of \cite[Th. 5]{8933047}, we can obtain the optimal solution to the problem \eqref{pmw1} given by
	\begin{equation}\label{mu'}
		\mu'_i=\dfrac{\sqrt{\omega_i/\lambda_ip_i}}{\sum_{j=1}^{N}\sqrt{\omega_j/\lambda_jp_j}}, \forall i.
	\end{equation}
	Substituting \eqref{mu'} makes the RHS of \eqref{UBMW}, we can arrive at \eqref{mu'} after some necessary manipulations. 
	This completes the proof.
	

\section{Proof of Corollary \ref{RLR} }\label{appE}

	According to the constraint \eqref{LB2}, $\mu_i=q^*_i/p_i$ satisfies the inequality \eqref{DriftIeq}. Thus, substituting $\mu_i=q^*_i/p_i$ and $\beta_i=\omega_i/\lambda_i\mu_ip_i$ into \eqref{UBMW} gives 
	\begin{equation}
		\begin{split}
			R^{POMW}&\le\frac{1}{N}\sum_{i=1}^{N}\omega_i\left(\frac{1}{\lambda_iq^*_i}+1 \right)\\
			&< \frac{1}{N}\sum_{i=1}^{N}\omega_i\left(\frac{1}{\lambda_{min}q^*_i}+\frac{3}{\lambda_{min}} \right)=\dfrac{2L_B}{\lambda_{min}},
		\end{split}
	\end{equation}
where the second inequality follows as $\lambda_{min}\le\lambda_i\le1$, and this completes the proof.

\section{Extension to Markovian Arrivals}\label{appF}

We extend our framework to the scenario with Markovian arrival processes in this appendix.
The packet arrival process at each node is characterized by a Markov chain, where the probability of a packet arrival in the current slot depends on the packet arrival situation in the previous slot. 
Let $A_{t,i}\in\left\{0,1\right\}$ denote the indicator of the packet arrival situation of node $i$ in slot $t$,
$A_{t,i}=1$ when there is a packet arrival at node $i$ in slot $t$, and $A_{t,i}=0$ otherwise.
The transition functions of $A_{t,i}$ can be written as
\begin{equation}\label{Trans_d1}
	\Pr\left( A_{t+1,i}|A_{t,i}\right)=
	\begin{cases}
		\overline{\Lambda}_i, &\text{if}\ A_{t,i}=1,A_{t+1,i}=1,\\
		\overline{\Gamma}_i, &\text{if}\ A_{t,i}=1,A_{t+1,i}=0,\\
		\Lambda_i, &\text{if}\ A_{t,i}=0,A_{t+1,i}=1,\\
		\Gamma_i, &\text{if}\ A_{t,i}=0,A_{t+1,i}=0,\\
		0, &\text{otherwise},
	\end{cases} 
\end{equation}
where $\Gamma_i \triangleq 1-\Lambda_i$ and $\overline{\Gamma}_i \triangleq 1-\overline{\Lambda}_i$.
The Markov chain diagram of the status update arrivals is shown in Fig. \ref{MPA}.
\begin{figure}[H]
	\centering
	\includegraphics[width=0.4\textwidth]{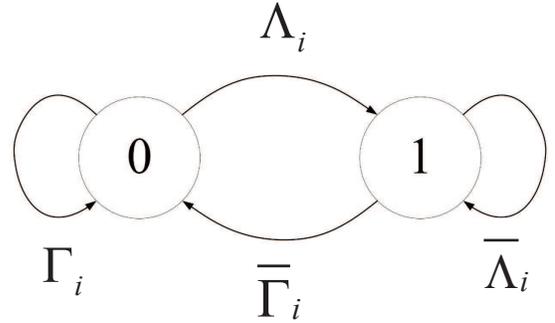}
	\caption{The Markov chain of the status update arrival process at node $i$.}
	\label{MPA}
\end{figure}
Recall that the status update arrival situation of each time slot is known to each node at the end of the said slot.
In this case, the update packet transmitted from the scheduled node to the AP only contains the status update arrived in the previous slot.
We note that the value of the local age of each node in the current slot implies the packet arrival situation in the previous time slot. 
Specifically, $d_{t+1,i}=1$ represents that a status update arrived at node $i$ in slot $t$, and $d_{t+1,i}>1$ otherwise. 
Therefore, the transition functions of the local age can be expressed as
\begin{equation}\label{Trans_dM}
	\Pr\left( d_{t+1,i}|d_{t,i}\right)=
	\begin{cases}
		\overline{\Lambda}_i, &\text{if}\ d_{t,i}=1 \text{ and } d_{t+1,i}=1, \\
		\overline{\Gamma}_i, &\text{if}\ d_{t,i}=1 \text{ and }d_{t+1,i}=2, \\
		\Lambda_i, &\text{if}\ d_{t,i}>1 \text{ and }d_{t+1,i}=1, \\
		\Gamma_i, &\text{if}\ d_{t,i}>1 \text{ and }d_{t+1,i}=d_{t,i}+1, \\
		0, & \text{otherwise.}
	\end{cases} 
\end{equation}

To characterize the belief state of the local age of a node mathematically, we artificially introduce a local observer at each node that can decide to observe whether a status update arrives at the said node or not.
Note that such observers do not exist in practice, and are introduced to facilitate the characterization of the belief state of the local age of a node.
The observation of the arrival of the status update at node $i$ in slot $t$ is denoted by $\hat{A}_{t,i}\in\left\{0,1,X\right\}$, where $\hat{A}_{t,i} = 1$ when an arrival of a status update is observed at node $i$ in slot $t$, $\hat{A}_{t,i} = 0$ when no arrival is observed at node $i$ in slot $t$, and $\hat{A}_{t,i} = X$ when no observation is made at node $i$ in slot $t$.
Let $\omega_{t,i}\triangleq \Pr(A_{t,i}=1|\bm{h}_{t,i})$ denotes the belief probability that a new status update arrives at node $i$ in slot $t$ given $\bm{h}_{t,i}\triangleq \left\langle \omega_{1,i},\hat{A}_{1,i},\dots,\hat{A}_{t-1,i}\right\rangle$. Hence, the belief state of the arrival of the status update at node $i$ can be expressed as $[\Pr(A_{t,i}=0|\bm{h}_{t,i}),\Pr(A_{t,i}=1|\bm{h}_{t,i})]=[1-\omega_{t,i},\omega_{t,i}]$, which is a two-dimension simplex. 
Given $\hat{A}_{t,i}$, $\omega_{t+1,i}$ can be updated by
\begin{equation}
    \omega_{t+1,i}=\eta(\omega_{t,i},\hat{A}_{t,i})=
    \begin{cases}
        \overline{\Lambda}_i,&\text{if}\ \hat{A}_{t,i}=1,\\
        \Lambda_i,&\text{if}\ \hat{A}_{t,i}=0,\\
        \mathcal{T}(\omega_{t,i}),&\text{if}\ \hat{A}_{t,i}=X,
    \end{cases}
\end{equation}
where $\mathcal{T}(\omega_{t,i})=\omega_{t,i}\overline{\Lambda}_i+(1-\omega_{t,i})\Lambda_i$ denotes the one step belief update of $\omega_{t,i}$. Additionally, we define the one step belief update of $1-\omega_{t,i}$ as $\mathcal{G}(1-\omega_{t,i})\triangleq 1-\mathcal{T}(\omega_{t,i})$.
Let $\mathcal{T}^m(\omega_{t,i})\triangleq \Pr(A_{t+m,i}=1|\omega_{t,i})$ denote the $m$-step belief update of $\omega_{t,i}$ when the arrival situation is unobserved for $m$ consecutive slots, where $m\in\left\{0,1,\cdots\right\}$ and $\mathcal{T}^0(\omega_{t,i})=\omega_{t,i}$. $\mathcal{G}^m(1-\omega_{t,i})\triangleq \Pr(A_{t+m,i}=0|\omega_{t,i})=1-\mathcal{T}^m(\omega_{t,i})$ denotes the $m$-step belief update of $1-\omega_{t,i}$ in the same case, where $\mathcal{G}^0(1-\omega_{t,i})=1-\omega_{t,i}$.
Further,
\begin{equation}\label{omgupdate}
    \begin{split}
  \mathcal{T}^m(\omega)&=\frac{\Lambda-(\overline{\Lambda}-\Lambda)^m(\Lambda-(1+\Lambda-\overline{\Lambda})\omega)}{1+\Lambda+\overline{\Lambda}}\\
  &\in [0,1],{\ \ \ \ }\forall \omega\in [0,1].
    \end{split}
\end{equation}
We remark that we follow a method in \cite{5605371} to derive \eqref{omgupdate} by formulating a partially observable two-state Markov chain.
For brevity, we refer the readers to the proof of \textbf{\textit{Lemma} 1} of \cite{5605371} for the derivation details.

We subsequently show the existence of a simplified representation of the belief states of the local age with the given $\bm{B}_1$. 
To start, we
have the following definition:
\begin{definition}\label{DefekmMa}
    Assume AP schedules node $i$ in slot $t$ with observation $d_{t,i}=k_i$, and then does not receive
	any packet from node $i$ in the following $m_i$ slots. Define the local age belief state of node $i$ in slot $t+m_i$ by $\bm{e}(k_i,m_i)$, namely, the local age belief of node $i$ with the last observation $k_i$ followed by $m_i$ elapsed slots.    
\end{definition}

For convenience, we ignore index $i$ for nodes and introduce the following proposition to show the simplified representation of the belief state of the type given in \textbf{\textit{Definition} \ref{DefekmMa}}.
\begin{proposition}\label{prop_kmM}
	The distribution vector of the local age belief state $\bm{e}(k,m)$ of node $i$ in slot $t$ can be given by \eqref{e1m} and \eqref{ekm2}  given on top of next page,
	\begin{figure*}
	    \begin{equation}\label{e1m}
		\begin{split}
			&\bm{e}(1,m)=\left[ e_{1,m}\left( d_t\right) \right]_{d_t\in\mathbb{Z}^+}\\
			&=\left[ \mathcal{T}^{m-1}(\overline{\Lambda}),\mathcal{G}^{m-1}(\overline{\Gamma})\mathcal{T}^{m-2}(\overline{\Lambda}),\mathcal{G}^{m-1}(\overline{\Gamma})\mathcal{G}^{m-2}(\overline{\Gamma})\mathcal{T}^{m-3}(\overline{\Lambda}),
		\cdots,\overline{\Lambda}\prod^{m-1}_{d=1}\mathcal{G}^d(\overline{\Gamma}),\prod^{m-1}_{d=0}\mathcal{G}^d(\overline{\Gamma}),0,\cdots\right],
		\end{split} 
	\end{equation}
	\begin{equation}\label{ekm2}
		\begin{split}
			&\bm{e}(k>1,m)=\left[ e_{k,m}\left( d_t\right) \right]_{d_t\in\mathbb{Z}^+}\\
			&=\left[ \mathcal{T}^{m-1}(\Lambda),\mathcal{G}^{m-1}(\Gamma)\mathcal{T}^{m-2}(\Lambda),\mathcal{G}^{m-1}(\Gamma)\mathcal{G}^{m-2}(\Gamma)\mathcal{T}^{m-3}(\Lambda),
		\cdots
		,\Lambda\prod^{m-1}_{d=1}\mathcal{G}^d(\Gamma),0,\cdots,0,\prod^{m-1}_{d=0}\mathcal{G}^d(\Gamma),0,\cdots\right].
		\end{split} 
	\end{equation}
	\end{figure*}
	where $k,m \in \mathbb{Z}^+ $, and $e_{k,m}\left( {d}_t\right) $ denotes the belief probability assigned to ${d}_t$. The position of the entry $\prod^{m-1}_{d=0}\mathcal{G}^d(\Gamma)$ is $k+m$,
	indicating that the corresponding destination AoI of entry $\prod^{m-1}_{d=0}\mathcal{G}^d(\Gamma)$ is $k+m$.
\end{proposition}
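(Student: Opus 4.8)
The plan is to prove \textbf{Proposition~\ref{prop_kmM}} by induction on the number of elapsed slots $m$, mirroring the inductive argument of Appendix~\ref{appA} for the Bernoulli case but now propagating the Markovian arrival belief. The reason two separate expressions \eqref{e1m} and \eqref{ekm2} arise is the correspondence recorded above \eqref{Trans_dM}, namely that $d_{t,i}=1$ holds exactly when $A_{t-1,i}=1$ and $d_{t,i}>1$ exactly when $A_{t-1,i}=0$. Hence observing $d_{t,i}=k$ in \textbf{Definition~\ref{DefekmMa}} pins down $A_{t-1,i}$, so the arrival belief entering slot $t$ equals $\overline{\Lambda}_i$ when $k=1$ and $\Lambda_i$ when $k>1$. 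I would therefore fix $k$ and induct on $m$ for each case separately; the two inductions are structurally identical, differing only by the substitution of $(\overline{\Lambda},\overline{\Gamma})$ with $(\Lambda,\Gamma)$ and by where the surviving ``no-arrival'' entry is seated.

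First I would verify the base case $m=1$. For $k=1$, conditioning on $A_{t-1,i}=1$ yields arrival probability $\overline{\Lambda}$ at slot $t$, so by \eqref{Trans_dM} the local age at slot $t+1$ equals $1$ with belief $\overline{\Lambda}$ and $k+1=2$ with belief $\overline{\Gamma}$. This matches \eqref{e1m} at $m=1$, since $\mathcal{T}^{0}(\overline{\Lambda})=\overline{\Lambda}$ and $\prod_{d=0}^{0}\mathcal{G}^{d}(\overline{\Gamma})=\overline{\Gamma}$, with the surviving entry located at position $k+m=2$. The case $k>1$ is identical after replacing $\overline{\Lambda},\overline{\Gamma}$ by $\Lambda,\Gamma$, giving \eqref{ekm2} at $m=1$.

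For the inductive step I would assume $\bm{e}(k,m)$ has the stated form and advance it by one slot under no observation. The key is to read each entry as the belief of a specific arrival pattern: the entry at local age $j$ (for $1\le j\le m$) is the belief that the most recent arrival occurred at slot $t+m-j$ with no arrival in the subsequent $j-1$ slots, while the entry at position $k+m$ is the belief of no arrival throughout. Writing these beliefs through the propagated arrival marginals $\mathcal{T}^{m-j}(\overline{\Lambda})$ and the no-arrival marginals $\mathcal{G}^{d}(\overline{\Gamma})$ reproduces \eqref{e1m}. Extending the observation gap by the single slot $t+m$ then updates the head (local age $1$) entry through one further application of $\mathcal{T}$, sending $\mathcal{T}^{m-1}(\overline{\Lambda})$ to $\mathcal{T}^{m}(\overline{\Lambda})$, and multiplies every remaining entry by the further no-arrival belief $\mathcal{G}^{m}(\overline{\Gamma})$ while shifting it one position to the right; in particular the surviving entry $\prod_{d=0}^{m-1}\mathcal{G}^{d}(\overline{\Gamma})$ moves from $k+m$ to $k+m+1$ and becomes $\prod_{d=0}^{m}\mathcal{G}^{d}(\overline{\Gamma})$. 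Matching the shifted, reweighted vector against \eqref{e1m} at $m+1$ closes the induction, and as in Appendix~\ref{appA} I would confirm that the result is a probability distribution, $\lVert\bm{e}(k,m+1)\rVert_1=1$, via the complement identity $\mathcal{G}^{m}(1-\omega)=1-\mathcal{T}^{m}(\omega)$ together with the closed form \eqref{omgupdate}.

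The hard part will be the bookkeeping rather than any single conceptual leap: correctly indexing the nested belief updates $\mathcal{T}^{m-j}$ alongside the telescoping products $\prod_{d}\mathcal{G}^{d}(\overline{\Gamma})$, and---most delicately---justifying that the per-slot survival factor applied in passing from $m$ to $m+1$ is the \emph{time-indexed} marginal $\mathcal{G}^{m}(\overline{\Gamma})$ rather than a constant, which is precisely what distinguishes the Markovian formula from its Bernoulli counterpart in \textbf{Proposition~\ref{prop_km}}. I would isolate this point by deriving the single-entry recursion directly from the arrival-belief update rule and the complement identity $\mathcal{G}^{m}(1-\omega)=1-\mathcal{T}^{m}(\omega)$, so that both the product structure of \eqref{e1m}--\eqref{ekm2} and the placement of the surviving entry at $k+m$ emerge automatically from the induction.
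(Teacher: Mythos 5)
Your proposal matches the paper's own proof essentially step for step: the same induction on $m$ with base case $\bm{e}(1,1)=[\,\overline{\Lambda},\overline{\Gamma},0,\cdots]$ anchored in the observation that $d_t=k$ pins down $A_{t-1}$, the same inductive recursion in which the head entry becomes $\mathcal{T}^{m}(\overline{\Lambda})$ while every remaining entry is multiplied by $\mathcal{G}^{m}(\overline{\Gamma})$ and shifted one position right, the same normalization check via $\mathcal{T}^{d}(\overline{\Lambda})+\mathcal{G}^{d}(\overline{\Gamma})=1$, and the same dispatch of the $k>1$ case by substituting $(\Lambda,\Gamma)$ for $(\overline{\Lambda},\overline{\Gamma})$. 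The only difference is one of execution rather than approach: the paper explicitly carries out the telescoping $\|\cdot\|_1$ computation and uses it to write the new head entry as $\|\bm{e}(1,m)\|_1\,\mathcal{T}^{m}(\overline{\Lambda})$, whereas you defer that bookkeeping (and the justification of the time-indexed survival factor $\mathcal{G}^{m}(\overline{\Gamma})$, which the paper likewise asserts from the marginal belief update) to be filled in.
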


\begin{proof}
    \textbf{\textit{Proposition} \ref{prop_kmM}} can be proved by induction. First, we show that 
    \begin{equation}
        \bm{e}(1,1)=\left[ \overline{\Lambda},\overline{\Gamma},0,\cdots,0,\cdots\right].
    \end{equation}
    Assume the corresponding slot of this belief state is $t+1$.
    $k=1$ and $m=1$ imply that the AP observed $d_{t}=1$. 
    According to the Markovian arrival process, we have $A_{t-1}=1$, which represents that the probability of $A_{t}=1$ is $\overline{\Lambda}$.
    Thus, by \eqref{Trans_dM}, the probability of $d_{t+1}=1$ is $\overline{\Lambda}$ and the probability of $d_{t+1}=d_{t}+1=2$ (i.e., no packet arrival in slot $t$) is $\overline{\Gamma}$.
    This satisfies \textbf{\textit{Proposition} \ref{prop_kmM}}.
    
    Suppose $\bm{e}(1,m)$ satisfies \eqref{e1m}.
    We first verify that $\bm{e}(1,m)$ is a probability distribution.
    Following the fact that  $\mathcal{T}^d(\overline{\Lambda})+\mathcal{G}^d(\overline{\Gamma})=1,\forall d\in\mathbb{N}$, the summation of the non-zero entries of $\bm{e}(1,m)$ from the second one to the last one can be derived and simplified as follows
    \begin{equation}
        \begin{split}
            &\mathcal{G}^{m-1}(\overline{\Gamma})\mathcal{T}^{m-2}(\overline{\Lambda})+\cdots+\mathcal{T}(\overline{\Lambda})\prod^{m-1}_{d=2}\mathcal{G}^d(\overline{\Gamma})\\
            &+\overline{\Lambda}\prod^{m-1}_{d=1}\mathcal{G}^d(\overline{\Gamma})+\prod^{m-1}_{d=0}\mathcal{G}^d(\overline{\Gamma})\\
            &= \mathcal{G}^{m-1}(\overline{\Gamma})\mathcal{T}^{m-2}(\overline{\Lambda})+\cdots+\mathcal{T}(\overline{\Lambda})\prod^{m-1}_{d=2}\mathcal{G}^d(\overline{\Gamma})+\prod^{m-1}_{d=1}\mathcal{G}^d(\overline{\Gamma})\\
            &=\mathcal{G}^{m-1}(\overline{\Gamma})\mathcal{T}^{m-2}(\overline{\Lambda})+\cdots+\prod^{m-1}_{d=2}\mathcal{G}^d(\overline{\Gamma})\\
            &\cdots\\
            &=\mathcal{G}^{m-1}(\overline{\Gamma})\mathcal{T}^{m-2}(\overline{\Lambda})+\mathcal{G}^{m-1}(\overline{\Gamma})\mathcal{G}^{m-2}(\overline{\Gamma})\\
            &=\mathcal{G}^{m-1}(\overline{\Gamma}).
        \end{split}
    \end{equation}
    It becomes clear that $\left\|\bm{e}(1,m)\right\|_1=1$ and $e_{1,m}(d)\in [0,1],\forall d\in \mathbb{Z}^+$ , i.e., $\bm{e}(1,m)$ is a distribution.
    According to \textbf{\textit{Definition} \ref{DefekmMa}}, $\bm{e}(1,m)$ reveals the expression of the belief state of the local age of a node when the AP observed a status update arrives $m$ consecutive unobserved slots after slot $t$ (i.e., in slot $t+m$).
    Hence, in slot $t+m$, the belief probabilities of a packet arrival and no packet arrival are $\mathcal{T}^m(\overline{\Lambda})$ and $\mathcal{G}^m(\overline{\Gamma})$, respectively.
    
    Now, we consider $\bm{e}(1,m+1)$, i.e., the local age belief state in slot $t+m+1$.
    The value of $d_{t+m}$ could be equal to one of the positions of the non-zero entries in $\bm{e}(1,m)$.
    Despite of the value of $d_{t+m}$, $d_{t+m}$ transits to $d_{t+m+1}=1$ with probability $\mathcal{T}^m(\overline{\Lambda})$.
    As such, $e_{1,m+1}\left( 1\right)=\left\|\bm{e}(1,m)\right\|_1\mathcal{T}^m(\overline{\Lambda})=\mathcal{T}^m(\overline{\Lambda})$.
    On the other hand, the probability $d_{t+m}$ transits to $d_{t+m+1}=d_{t+m}+1$ is $\mathcal{G}^m(\overline{\Gamma})$, and hence $e_{1,m+1}\left( d\right)=e_{1,m}\left( d-1\right)\mathcal{G}^m(\overline{\Gamma}),\forall d>1$.
    Thus, the expression of $\bm{e}(1,m+1)$ is given by \eqref{e1m+1},
    \begin{figure*}
    \begin{equation}\label{e1m+1}
        \begin{split}
			\bm{e}(1,m+1)
			=\left[ \mathcal{T}^{m}(\overline{\Lambda}),\mathcal{G}^{m}(\overline{\Gamma})\mathcal{T}^{m-1}(\overline{\Lambda}),\mathcal{G}^{m}(\overline{\Gamma})\mathcal{G}^{m-1}(\overline{\Gamma})\mathcal{T}^{m-2}(\overline{\Lambda}),
		\cdots,\overline{\Lambda}\prod^{m}_{d=1}\mathcal{G}^d(\overline{\Gamma}),\prod^{m}_{d=0}\mathcal{G}^d(\overline{\Gamma}),0,\cdots\right].
		\end{split}
    \end{equation}
    \hrulefill
    \end{figure*}
    which follows \textbf{\textit{Proposition} \ref{prop_kmM}}.
    
    Similarly, we can also prove \eqref{ekm2}, which is omitted for brevity.
    This completes the proof.

\end{proof}

The belief state in \textbf{\textit{Definition} \ref{DefekmMa}} is the distribution of the local age only, while the completed belief state of a node also involves the destination AoI.
To this end, we define a group of belief states that have the destination AoI equal to $k+m$ together with $\bm{e}(k,m)$ defined in \textbf{\textit{Proposition} \ref{prop_kmM}} as $\bm{E}(k,m)\triangleq \left\langle k+m,\bm{e}(k,m)\right\rangle$ for $m,k\in \mathbb{Z}^+$. 
Denote by $\bm{\mathcal{E}}$ the collection of all possible $\bm{E  }(k,m)$. Then, we have the following corollary.
\vspace{4mm}
\begin{corollary}\label{cokmM}
 Suppose the network with the MAP has a belief state $\bm{B}_{t,i}\in \bm{\mathcal{E}}$, then $\bm{B}_{t',i}\in \bm{\mathcal{E}}$ for any $t'>t$.   
\end{corollary}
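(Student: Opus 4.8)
The plan is to mirror the induction used to establish \textbf{\textit{Corollary} \ref{corokm}} for the Bernoulli case, but now driving the inductive step with the Markovian belief structure captured by \textbf{\textit{Proposition} \ref{prop_kmM}}. Since the statement already supposes $\bm{B}_{t,i}\in\bm{\mathcal{E}}$, I would take this as the base case and argue by induction on $t'\ge t$ that $\bm{\mathcal{E}}$ is closed under one step of the belief update. Concretely, writing the inductive hypothesis as $\bm{B}_{t',i}=\bm{E}(k_{t',i},m_{t',i})=\left\langle k_{t',i}+m_{t',i},\bm{e}(k_{t',i},m_{t',i})\right\rangle\in\bm{\mathcal{E}}$, where $k_{t',i}$ is the last observed local age and $m_{t',i}$ the number of slots elapsed since that observation, it suffices to show that $\bm{B}_{t'+1,i}$ remains of the form $\bm{E}(\cdot,\cdot)$ regardless of the action and observation in slot $t'$.

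I would then split the step into two cases according to the observation $\hat{d}_{t',i}$. In the unobserved case ($\hat{d}_{t',i}=X$, i.e., node $i$ is not scheduled or its transmission fails), one additional slot elapses without information, so the belief advances exactly along the evolution branch already verified inside the proof of \textbf{\textit{Proposition} \ref{prop_kmM}}, namely $\bm{e}(k_{t',i},m_{t',i})\mapsto\bm{e}(k_{t',i},m_{t',i}+1)$, while the destination AoI increments by one. Hence $\bm{B}_{t'+1,i}=\bm{E}(k_{t',i},m_{t',i}+1)\in\bm{\mathcal{E}}$. In the observed case (node $i$ scheduled and transmission successful), the AP learns the exact local age $\hat{k}_{t',i}$, which by the support of $\bm{e}(k_{t',i},m_{t',i})$ lies in $\left\lbrace 1,2,\cdots,m_{t',i}\right\rbrace\cup\left\lbrace k_{t',i}+m_{t',i}\right\rbrace$; the belief collapses to a fresh observation and I would identify $\bm{B}_{t'+1,i}=\bm{E}(\hat{k}_{t',i},1)\in\bm{\mathcal{E}}$ with $D_{t'+1,i}=\hat{k}_{t',i}+1$.

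The one genuinely new point relative to the Bernoulli argument, and the step I expect to require the most care, is the reset case. In the Markovian model the next-slot arrival probability depends on whether an arrival occurred in the immediately preceding slot, and observing the local age reveals precisely this: $\hat{k}_{t',i}=1$ certifies an arrival in the previous slot (so the subsequent one-step update uses $\overline{\Lambda},\overline{\Gamma}$), whereas $\hat{k}_{t',i}>1$ certifies no such arrival (so it uses $\Lambda,\Gamma$). I would therefore verify that the post-observation belief matches $\bm{e}(1,1)=[\overline{\Lambda},\overline{\Gamma},0,\cdots]$ when $\hat{k}_{t',i}=1$ and the $k>1$ branch of \eqref{ekm2} when $\hat{k}_{t',i}>1$; these are exactly the two cases enumerated in \textbf{\textit{Definition} \ref{DefekmMa}} and \textbf{\textit{Proposition} \ref{prop_kmM}}, so the observation furnishes complete information about the governing arrival state and the reset indeed lands in $\bm{\mathcal{E}}$. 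With both cases closed, the induction propagates the membership from slot $t$ to every $t'>t$, completing the proof.
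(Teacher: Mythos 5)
Your proposal is correct and follows essentially the same route as the paper: the paper omits this proof, stating only that it is ``similar to that of \textbf{\textit{Corollary} \ref{corokm}},'' and your induction (unobserved case $\bm{e}(k,m)\mapsto\bm{e}(k,m+1)$ with AoI incrementing, observed case collapsing to $\bm{E}(\hat{k},1)$ with $D=\hat{k}+1$) is precisely that argument transplanted to the Markovian setting. Your added care on the reset branch --- noting that $\hat{k}=1$ versus $\hat{k}>1$ reveals the previous-slot arrival state and hence selects the $\overline{\Lambda},\overline{\Gamma}$ or $\Lambda,\Gamma$ branch of \textbf{\textit{Proposition} \ref{prop_kmM}} --- is exactly the point that makes the Bernoulli induction carry over, and is consistent with the two-case structure of \eqref{e1m} and \eqref{ekm2}.
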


\vspace{4mm}

The proof of \textbf{\textit{Corollary} \ref{cokmM}} is similar to that of \textbf{\textit{Corollary} \ref{corokm}}, and hence is omitted for brevity.

\ifCLASSOPTIONcompsoc
 \section*{Acknowledgments}
\else
 \section*{Acknowledgment}
\fi

The authors would like to thank Tong Zhang and Yijin Zhang for their helpful discussions on establishing the network model and problem formulation. 
%
%
%
%
%
%

\bibliographystyle{IEEEtran}
\bibliography{Ref}

\begin{thebibliography}{10}
\providecommand{\url}[1]{#1}
\csname url@samestyle\endcsname
\providecommand{\newblock}{\relax}
\providecommand{\bibinfo}[2]{#2}
\providecommand{\BIBentrySTDinterwordspacing}{\spaceskip=0pt\relax}
\providecommand{\BIBentryALTinterwordstretchfactor}{4}
\providecommand{\BIBentryALTinterwordspacing}{\spaceskip=\fontdimen2\font plus
\BIBentryALTinterwordstretchfactor\fontdimen3\font minus
  \fontdimen4\font\relax}
\providecommand{\BIBforeignlanguage}[2]{{%
\expandafter\ifx\csname l@#1\endcsname\relax
\typeout{** WARNING: IEEEtran.bst: No hyphenation pattern has been}%
\typeout{** loaded for the language `#1'. Using the pattern for}%
\typeout{** the default language instead.}%
\else
\language=\csname l@#1\endcsname
\fi
#2}}
\providecommand{\BIBdecl}{\relax}
\BIBdecl

\bibitem{6917404}
Y.~Sun and K.~R. Chowdhury, ``Enabling emergency communication through a
  cognitive radio vehicular network,'' \emph{IEEE Communications Magazine},
  vol.~52, no.~10, pp. 68--75, 2014.

\bibitem{7467436}
J.~Wan, S.~Tang, Z.~Shu, D.~Li, S.~Wang, M.~Imran, and A.~V. Vasilakos,
  ``Software-defined industrial internet of things in the context of industry
  4.0,'' \emph{IEEE Sensors Journal}, vol.~16, no.~20, pp. 7373--7380, 2016.

\bibitem{7799033}
R.~Talak, S.~Karaman, and E.~Modiano, ``Speed limits in autonomous vehicular
  networks due to communication constraints,'' in \emph{2016 IEEE 55th
  Conference on Decision and Control (CDC)}, 2016, pp. 4998--5003.

\bibitem{kosta2017age}
A.~Kosta, N.~Pappas, and V.~Angelakis, ``Age of information: A new concept,
  metric, and tool,'' \emph{Foundations and Trends in Networking}, vol.~12,
  no.~3, pp. 162--259, 2017.

\bibitem{8930830}
M.~A. Abd-Elmagid, N.~Pappas, and H.~S. Dhillon, ``On the role of age of
  information in the internet of things,'' \emph{IEEE Communications Magazine},
  vol.~57, no.~12, pp. 72--77, 2019.

\bibitem{8954939}
X.~Chen, C.~Wu, T.~Chen, H.~Zhang, Z.~Liu, Y.~Zhang, and M.~Bennis, ``Age of
  information aware radio resource management in vehicular networks: A
  proactive deep reinforcement learning perspective,'' \emph{IEEE Transactions
  on Wireless Communications}, vol.~19, no.~4, pp. 2268--2281, 2020.

\bibitem{8406973}
J.~Liu, X.~Wang, B.~Bai, and H.~Dai, ``Age-optimal trajectory planning for
  uav-assisted data collection,'' in \emph{IEEE INFOCOM 2018 - IEEE Conference
  on Computer Communications Workshops (INFOCOM WKSHPS)}, 2018, pp. 553--558.

\bibitem{9013924}
M.~A. Abd-Elmagid, A.~Ferdowsi, H.~S. Dhillon, and W.~Saad, ``Deep
  reinforcement learning for minimizing age-of-information in uav-assisted
  networks,'' in \emph{2019 IEEE Global Communications Conference (GLOBECOM)},
  2019, pp. 1--6.

\bibitem{8406966}
R.~D. Yates, ``Age of information in a network of preemptive servers,'' in
  \emph{IEEE INFOCOM 2018 - IEEE Conference on Computer Communications
  Workshops (INFOCOM WKSHPS)}, 2018, pp. 118--123.

\bibitem{5984917}
S.~Kaul, M.~Gruteser, V.~Rai, and J.~Kenney, ``Minimizing age of information in
  vehicular networks,'' in \emph{2011 8th Annual IEEE Communications Society
  Conference on Sensor, Mesh and Ad Hoc Communications and Networks}, 2011, pp.
  350--358.

\bibitem{6195689}
S.~Kaul, R.~Yates, and M.~Gruteser, ``Real-time status: How often should one
  update?'' in \emph{2012 Proceedings IEEE INFOCOM}, 2012, pp. 2731--2735.

\bibitem{8065840}
M.~Desai and A.~Phadke, ``Internet of things based vehicle monitoring system,''
  in \emph{2017 Fourteenth International Conference on Wireless and Optical
  Communications Networks (WOCN)}, 2017, pp. 1--3.

\bibitem{sun2019age}
Y.~Sun, I.~Kadota, R.~Talak, and E.~Modiano, ``Age of information: A new metric
  for information freshness,'' \emph{Synthesis Lectures on Communication
  Networks}, vol.~12, no.~2, pp. 1--224, 2019.

\bibitem{8006703}
B.~T. Bacinoglu and E.~Uysal-Biyikoglu, ``Scheduling status updates to minimize
  age of information with an energy harvesting sensor,'' in \emph{2017 IEEE
  International Symposium on Information Theory (ISIT)}, 2017, pp. 1122--1126.

\bibitem{8972306}
Q.~Wang, H.~Chen, Y.~Li, Z.~Pang, and B.~Vucetic, ``Minimizing age of
  information for real-time monitoring in resource-constrained industrial iot
  networks,'' in \emph{2019 IEEE 17th International Conference on Industrial
  Informatics (INDIN)}, vol.~1, 2019, pp. 1766--1771.

\bibitem{10.1145/3466772.3467040}
\BIBentryALTinterwordspacing
J.~Pan, A.~M. Bedewy, Y.~Sun, and N.~B. Shroff, ``Minimizing age of information
  via scheduling over heterogeneous channels,'' in \emph{Proceedings of the
  Twenty-Second International Symposium on Theory, Algorithmic Foundations, and
  Protocol Design for Mobile Networks and Mobile Computing}, ser. MobiHoc
  '21.\hskip 1em plus 0.5em minus 0.4em\relax New York, NY, USA: Association
  for Computing Machinery, 2021, p. 111–120. [Online]. Available:
  \url{https://doi.org/10.1145/3466772.3467040}
\BIBentrySTDinterwordspacing

\bibitem{7415972}
M.~Costa, M.~Codreanu, and A.~Ephremides, ``On the age of information in status
  update systems with packet management,'' \emph{IEEE Transactions on
  Information Theory}, vol.~62, no.~4, pp. 1897--1910, 2016.

\bibitem{7511331}
Q.~He, D.~Yuan, and A.~Ephremides, ``On optimal link scheduling with min-max
  peak age of information in wireless systems,'' in \emph{2016 IEEE
  International Conference on Communications (ICC)}, 2016, pp. 1--7.

\bibitem{dedhia2020you}
B.~Dedhia and S.~Moharir, ``You snooze, you lose: Minimizing channel-aware age
  of information,'' in \emph{2020 18th International Symposium on Modeling and
  Optimization in Mobile, Ad Hoc, and Wireless Networks (WiOPT)}.\hskip 1em
  plus 0.5em minus 0.4em\relax IEEE, 2020, pp. 1--8.

\bibitem{8000687}
Y.~Sun, E.~Uysal-Biyikoglu, R.~D. Yates, C.~E. Koksal, and N.~B. Shroff,
  ``Update or wait: How to keep your data fresh,'' \emph{IEEE Transactions on
  Information Theory}, vol.~63, no.~11, pp. 7492--7508, 2017.

\bibitem{8712546}
S.~Leng and A.~Yener, ``Age of information minimization for an energy
  harvesting cognitive radio,'' \emph{IEEE Transactions on Cognitive
  Communications and Networking}, vol.~5, no.~2, pp. 427--439, 2019.

\bibitem{8486307}
I.~Kadota, A.~Sinha, and E.~Modiano, ``Optimizing age of information in
  wireless networks with throughput constraints,'' in \emph{IEEE INFOCOM 2018 -
  IEEE Conference on Computer Communications}, 2018, pp. 1844--1852.

\bibitem{8406846}
S.~Farazi, A.~G. Klein, and D.~R. Brown, ``Average age of information for
  status update systems with an energy harvesting server,'' in \emph{IEEE
  INFOCOM 2018 - IEEE Conference on Computer Communications Workshops (INFOCOM
  WKSHPS)}, 2018, pp. 112--117.

\bibitem{8845182}
E.~T. Ceran, D.~Gündüz, and A.~György, ``Reinforcement learning to minimize
  age of information with an energy harvesting sensor with harq and sensing
  cost,'' in \emph{IEEE INFOCOM 2019 - IEEE Conference on Computer
  Communications Workshops (INFOCOM WKSHPS)}, 2019, pp. 656--661.

\bibitem{8406909}
J.~P. Champati, H.~Al-Zubaidy, and J.~Gross, ``Statistical guarantee
  optimization for age of information for the d/g/1 queue,'' in \emph{IEEE
  INFOCOM 2018 - IEEE Conference on Computer Communications Workshops (INFOCOM
  WKSHPS)}, 2018, pp. 130--135.

\bibitem{8514816}
I.~Kadota, A.~Sinha, E.~Uysal-Biyikoglu, R.~Singh, and E.~Modiano, ``Scheduling
  policies for minimizing age of information in broadcast wireless networks,''
  \emph{IEEE/ACM Transactions on Networking}, vol.~26, no.~6, pp. 2637--2650,
  2018.

\bibitem{8935400}
J.~Sun, Z.~Jiang, B.~Krishnamachari, S.~Zhou, and Z.~Niu, ``Closed-form
  whittle’s index-enabled random access for timely status update,''
  \emph{IEEE Transactions on Communications}, vol.~68, no.~3, pp. 1538--1551,
  2020.

\bibitem{8933047}
I.~Kadota and E.~Modiano, ``Minimizing the age of information in wireless
  networks with stochastic arrivals,'' \emph{IEEE Transactions on Mobile
  Computing}, vol.~20, no.~3, pp. 1173--1185, 2021.

\bibitem{9348022}
A.~Gong, T.~Zhang, H.~Chen, and Y.~Zhang, ``Age-of-information-based scheduling
  in multiuser uplinks with stochastic arrivals: A pomdp approach,'' in
  \emph{GLOBECOM 2020 - 2020 IEEE Global Communications Conference}, 2020, pp.
  1--6.

\bibitem{8807257}
Y.-P. Hsu, E.~Modiano, and L.~Duan, ``Scheduling algorithms for minimizing age
  of information in wireless broadcast networks with random arrivals,''
  \emph{IEEE Transactions on Mobile Computing}, vol.~19, no.~12, pp.
  2903--2915, 2020.

\bibitem{8845083}
Z.~Chen, N.~Pappas, E.~Björnson, and E.~G. Larsson, ``Age of information in a
  multiple access channel with heterogeneous traffic and an energy harvesting
  node,'' in \emph{IEEE INFOCOM 2019 - IEEE Conference on Computer
  Communications Workshops (INFOCOM WKSHPS)}, 2019, pp. 662--667.

\bibitem{9376717}
E.~T. Ceran, D.~Gündüz, and A.~György, ``A reinforcement learning approach
  to age of information in multi-user networks with harq,'' \emph{IEEE Journal
  on Selected Areas in Communications}, vol.~39, no.~5, pp. 1412--1426, 2021.

\bibitem{9525063}
Q.~Wang, H.~Chen, C.~Zhao, Y.~Li, P.~Popovski, and B.~Vucetic, ``Optimizing
  information freshness via multiuser scheduling with adaptive noma/oma,''
  \emph{IEEE Transactions on Wireless Communications}, vol.~21, no.~3, pp.
  1766--1778, 2022.

\bibitem{neely2010stochastic}
M.~J. Neely, ``Stochastic network optimization with application to
  communication and queueing systems,'' \emph{Synthesis Lectures on
  Communication Networks}, vol.~3, no.~1, pp. 1--211, 2010.

\bibitem{papadimitriou1987complexity}
C.~H. Papadimitriou and J.~N. Tsitsiklis, ``The complexity of markov decision
  processes,'' \emph{Mathematics of operations research}, vol.~12, no.~3, pp.
  441--450, 1987.

\bibitem{9517880}
G.~Yao, A.~M. Bedewy, and N.~B. Shroff, ``Age-optimal low-power status update
  over time-correlated fading channel,'' in \emph{2021 IEEE International
  Symposium on Information Theory (ISIT)}, 2021, pp. 2972--2977.

\bibitem{8404794}
E.~Sert, C.~Sönmez, S.~Baghaee, and E.~Uysal-Biyikoglu, ``Optimizing age of
  information on real-life tcp/ip connections through reinforcement learning,''
  in \emph{2018 26th Signal Processing and Communications Applications
  Conference (SIU)}, 2018, pp. 1--4.

\bibitem{9590543}
Y.~Shao, Q.~Cao, S.~C. Liew, and H.~Chen, ``Partially observable minimum-age
  scheduling: The greedy policy,'' \emph{IEEE Transactions on Communications},
  pp. 1--1, 2021.

\bibitem{mcallester2013approximate}
D.~A. McAllester and S.~Singh, ``Approximate planning for factored pomdps using
  belief state simplification,'' \emph{arXiv preprint arXiv:1301.6719}, 2013.

\bibitem{kaelbling1998planning}
L.~P. Kaelbling, M.~L. Littman, and A.~R. Cassandra, ``Planning and acting in
  partially observable stochastic domains,'' \emph{Artificial intelligence},
  vol. 101, no. 1-2, pp. 99--134, 1998.

\bibitem{zhang2001speeding}
N.~L. Zhang and W.~Zhang, ``Speeding up the convergence of value iteration in
  partially observable markov decision processes,'' \emph{Journal of Artificial
  Intelligence Research}, vol.~14, pp. 29--51, 2001.

\bibitem{9676636}
F.~Chiariotti, J.~Holm, A.~E. Kalør, B.~Soret, S.~K. Jensen, T.~B. Pedersen,
  and P.~Popovski, ``Query age of information: Freshness in pull-based
  communication,'' \emph{IEEE Transactions on Communications}, vol.~70, no.~3,
  pp. 1606--1622, 2022.

\bibitem{5605371}
K.~Liu and Q.~Zhao, ``Indexability of restless bandit problems and optimality
  of whittle index for dynamic multichannel access,'' \emph{IEEE Transactions
  on Information Theory}, vol.~56, no.~11, pp. 5547--5567, 2010.

\end{thebibliography}

\end{document}